\numberwithin{equation}{section}
\newtheorem{Theorem}{Theorem}[section]
\newtheorem{Corollary}[Theorem]{Corollary}
\newtheorem{Lemma}[Theorem]{Lemma}
\newtheorem{Proposition}[Theorem]{Proposition}
 { \theoremstyle{definition}
\newtheorem{Definition}[Theorem]{Definition}
\newtheorem{Example}[Theorem]{Example}
\newtheorem{Remark}[Theorem]{Remark} }
\newcommand{\B}{\mathcal{B} }
\newcommand{\bC}{{\mathbb C}}
\newcommand{\bZ}{{\mathbb Z}}
\newcommand{\bN}{{\mathbb N}}
\begin{document}

\newcommand{\arXivNumber}{1805.06971}

\renewcommand{\thefootnote}{}

\renewcommand{\PaperNumber}{065}

\FirstPageHeading

\ShortArticleName{Multiparameter Schur $Q$-Functions Are Solutions of the BKP Hierarchy}

\ArticleName{Multiparameter Schur $\boldsymbol{Q}$-Functions\\ Are Solutions of the BKP Hierarchy\footnote{This paper is a~contribution to the Special Issue on Representation Theory and Integrable Systems in honor of Vitaly Tarasov on the 60th birthday and Alexander Varchenko on the 70th birthday. The full collection is available at \href{https://www.emis.de/journals/SIGMA/Tarasov-Varchenko.html}{https://www.emis.de/journals/SIGMA/Tarasov-Varchenko.html}}}

\Author{Natasha ROZHKOVSKAYA}

\AuthorNameForHeading{N.~Rozhkovskaya}

\Address{Department of Mathematics, Kansas State University, Manhattan, KS 66502, USA}
\Email{\href{mailto:rozhkovs@math.ksu.edu}{rozhkovs@math.ksu.edu}}
\URLaddress{\url{http://www.math.ksu.edu/~rozhkovs/}}

\ArticleDates{Received May 20, 2019, in final form August 23, 2019; Published online August 28, 2019}

\Abstract{We prove that multiparameter Schur $Q$-functions, which include as specializations factorial Schur $Q$-functions and classical Schur $Q$-functions, provide solutions of the BKP hierarchy.}

\Keywords{BKP hierarchy; symmetric functions; factorial Schur $Q$-functions; multipa\-ra\-me\-ter Schur $Q$-functions; vertex operators}

\Classification{05E05; 17B65; 17B69; 11C20}

\renewcommand{\thefootnote}{\arabic{footnote}}
\setcounter{footnote}{0}

\section{Introduction}

Integrable systems of the KP (Kadomtsev--Petviashvilli) type hierarchy of partial differential equations, which corresponds to infinite-dimensional Lie algebra of type A, and of its type B variant, the BKP hierarchy, have as solutions renowned families of symmetric functions~--
Schur polynomials in the KP case, and Schur $Q$-polynomials in the BKP case \cite{DKTIV,DKTA,DKTII, JM,JMbook1,Sato,You1,You2}, etc.
In this note we show that multiparameter Schur $Q$-functions also provide solutions of the BKP hierarchy.

Multiparameter Schur $Q$-functions $Q_\lambda ^{(a)}$ were introduced and studied combinatorially in \cite{Iv1}. These symmetric functions are interpolation analogues of the classical Schur $Q$-functions depending on a sequence of complex valued parameters
$a=(a_0, a_1,\dots)$. The definition of multiparameter Schur $Q$-functions is reproduced in (\ref{defQa}). {\it Classical} Schur $Q$-functions correspond to $a=(0,0,0,\dots)$, and with the evaluation $a=(0, 1,2, 3,\dots)$ the multiparameter Schur $Q$-functions are called {\it factorial} Schur $Q$-functions. These families of symmetric functions proved to be useful in study of a number of questions of representation theory and algebraic geometry. Here are a few examples.

The authors of \cite{ASS, Naz,Serg} described Capelli polynomials of the queer Lie superalgebra which form a distinguished family of super-polynomial differential operators indexed by strict partitions acting on an associative superalgebra. The eigenvalues of these Capelli polynomials are expressed through the factorial Schur $Q$-functions.

In \cite{Ikeda, Ikeda2} the equivariant cohomology of a Lagrangian Grassmannian of a symplectic or orthogonal types is studied. The restrictions of Schubert classes to the set of points fixed under the action of a maximal torus of the symplectic group are calculated in terms of factorial symmetric functions. Further in \cite{Leon1} factorial Schur $Q$-functions are used to write generators and relations for the equivariant quantum cohomology rings of the maximal isotropic Grassmannians of types~B,~C and~D.

In \cite{Henry} the center of the twisted version of Khovanov's Heisenberg category is identified with the algebra generated by classical Schur $Q$-functions (denoted as $\B_{\rm odd}$ in the exposition below). Factorial Schur $Q$-functions are described as closed diagrams of this category.

The goal of this note is to show that multiparameter Schur $Q$-functions $Q_\lambda ^{(a)}$ are solutions of the BKP hierarchy. The origin for this phenomena lies in the fact proved in \cite{Kor} that generating functions of multiparameter Schur $Q$-functions and of classical Schur $Q$-functions coincide.

While the BKP hierarchy is described in a wide range of literature on integrable systems and solitons, for the completeness of exposition and for the convenience of the reader we formulate the whole setting of the BKP hierarchy in terms of generating functions of symmetric functions with the neutral fermions bilinear identity~(\ref{binBKP}) as a starting point. We avoid to use any other facts than the well-known properties of symmetric functions that can be found in the classical monograph~\cite{Md}, and through the text we provide the references to the corresponding chapters and examples of that monograph.

It is worth to mention that formulation of the KP and the BKP integrable systems solely in terms of symmetric functions can be found, e.g., in~\cite{JY}. The authors of \cite{JY} start with the bilinear identities in integral form, then, using the Cauchy type orthogonality properties of symmetric functions (cf.~\cite[Chapter~III, equation~(8.13)]{Md}), they arrive at Plucker type relations, and the later ones are transformed into the collection of partial differential equations of Hirota derivatives that constitute the hierarchy. As it is mentioned above, our route is traced differently employing the properties of generating functions of complete, elementary symmetric functions and power sums. We obtain differential equations of the hierarchy in Hirota form as coefficients of Taylor expansions. One of the advantages of this approach is that it directly addresses the corresponding vertex operators actions, since the later ones are also `generating functions' (formal distributions).

The paper is organized as follows. In Section~\ref{section2} we recall some facts about complete, elementary symmetric functions, power sums and classical Schur $Q$-functions. In Section~\ref{section3} we describe the action of neutral fermions on the space generated by classical Schur $Q$-functions. In Section~\ref{section4} we review properties of generating functions for multiplication operators and corresponding adjoint operators and deduce vertex operator form of the formal distribution of neutral fermions. In Section~\ref{section5} we review all the steps of recovering the BKP hierarchy of partial differential equations in Hirota form from the neutral fermions bilinear identity. In Section~\ref{section6} we make simple observation that immediately shows that classical Schur $Q$-functions are solutions of the BKP hierarchy (which recovers the result of~\cite{You1}). In Section~\ref{multi_sec} we introduce multiparameter Schur $Q$-functions, and using the observation of Section~\ref{section6}, we show that $Q_\lambda ^{(a)}$ are also solutions of the BKP hierarchy.

\section[Schur $Q$-functions]{Schur $\boldsymbol{Q}$-functions}\label{section2}

Let $\B$ be the ring of symmetric functions in variables $(x_1, x_2, \dots)$. Consider the families of the following symmetric functions:
\begin{alignat*}{3}
&\text{elementary symmetric functions}\qquad && \left\{e_k=\sum_{ i_1<\dots < i_k} x_{i_1}\cdots x_{i_k}\,| \, {k=0,1,\dots}\right\}, & \\
&\text{complete symmetric functions} \qquad && \left\{h_k=\sum_{ i_1\le \dots \le i_k} x_{i_1}\cdots x_{i_k}\,|\, {k=0,1,\dots}\right\}, \\
&\text{symmetric power sums} \qquad && \left\{p_k=\sum x_i^k \,|\,{k=0,1,\dots }\right\}.
\end{alignat*}
We set $e_k=h_k=0$ for $k<0$. It is well-known \cite[Chapter I.2]{Md}, that each of these families generate $\B$ as a polynomial ring:
\begin{gather*}
\B=\bC[p_1, p_2, p_3,\dots]= \bC[e_1, e_2, e_3,\dots]= \bC[h_1, h_2, h_3,\dots].
\end{gather*}

Combine the families $h_k$, $e_k$, $p_k $ into generating functions
\begin{gather*}
H(u)=\sum_{k\ge 0} \frac{h_k}{u^k},\qquad E(u)=\sum_{k\ge 0} \frac{e_k}{u^k}, \qquad P(u)=\sum_{k\ge 1} \frac{p_k}{u^k}.
\end{gather*}

The following facts are well-known \cite[Chapter~I.2]{Md}.
\begin{Lemma}\label{Lemma1}
\begin{gather*}
H(u)=\prod_i \frac{1}{1-x_i/u},\qquad E(u)=\prod_i {1+x_i/u},\qquad E(-u)H(u)=1,\\
H(u)=\exp\left(\sum_{n\ge1}\frac{1}{n} \frac{p_n}{u^{n}}\right),\qquad
E(u)=\exp\left(\sum_{n\ge1}\frac{(-1)^{n-1}}{n} \frac{p_{n}}{u^n}\right).
\end{gather*}
\end{Lemma}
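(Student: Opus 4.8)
The plan is to establish the two product formulas first, then derive the remaining three identities from them by purely formal manipulations of power series in $1/u$ with coefficients in $\B$.

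First I would prove $H(u)=\prod_i(1-x_i/u)^{-1}$ and $E(u)=\prod_i(1+x_i/u)$ by expanding the infinite products as formal power series in $1/u$ and matching coefficients against the defining sums. For $H$, I would expand each factor as a geometric series $(1-x_i/u)^{-1}=\sum_{m\ge 0}(x_i/u)^m$; collecting the coefficient of $u^{-k}$ in the resulting product yields the sum of all monomials $x_{i_1}\cdots x_{i_k}$ with $i_1\le\cdots\le i_k$, which is exactly $h_k$. For $E$, expanding $\prod_i(1+x_i/u)$ and reading off the coefficient of $u^{-k}$ produces the sum over strictly increasing index tuples, which is $e_k$. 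The point requiring care is that these identities live in $\B[[u^{-1}]]$, so the rearrangement of terms is legitimate precisely because only finitely many monomials of a given degree contribute to each fixed power $u^{-k}$.

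Next, the relation $E(-u)H(u)=1$ becomes immediate once the product formulas are in hand: substituting $-u$ gives $E(-u)=\prod_i(1-x_i/u)$, which is the term-by-term inverse of $H(u)=\prod_i(1-x_i/u)^{-1}$, so the product collapses to $1$.

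Finally, for the exponential formulas I would take the formal logarithm of each product and apply the Mercator expansion $\log(1+t)=\sum_{n\ge1}(-1)^{n-1}t^n/n$. For $H$ this gives $\log H(u)=-\sum_i\log(1-x_i/u)=\sum_i\sum_{n\ge1}\frac{1}{n}(x_i/u)^n=\sum_{n\ge1}\frac{p_n}{n\,u^n}$, and exponentiating yields the stated identity; for $E$ the same computation with the alternating sign produces $\sum_{n\ge1}\frac{(-1)^{n-1}}{n}\,p_n u^{-n}$. The main (and essentially only) obstacle is bookkeeping in the formal setting: one must check that interchanging the sums over $i$ and over $n$ is valid, which again follows because each power $u^{-n}$ receives contributions solely from the degree-$n$ part, so that $\sum_i x_i^n=p_n$ is a well-defined element of $\B$ and the formal $\log$ and $\exp$ are well-defined inverse maps on $1+u^{-1}\B[[u^{-1}]]$, where both $H(u)$ and $E(u)$ lie since $h_0=e_0=1$.
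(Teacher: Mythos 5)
Your proof is correct: the paper itself offers no proof of this lemma, citing it as a collection of well-known facts from Macdonald's monograph (Chapter I.2), and your argument — expanding the products to match coefficients of $u^{-k}$, then taking formal logarithms via the Mercator series to get the power-sum exponentials — is precisely the standard argument given there. The only cosmetic caveat is your bookkeeping remark: with infinitely many variables it is not that finitely many monomials contribute to each power $u^{-k}$, but rather that each individual monomial in the $x_i$'s receives only finitely many contributions (equivalently, one works in finitely many variables and passes to the inverse limit), which is the correct justification for the rearrangements.
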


We introduce one more family of symmetric functions $\{Q_k=Q_k(x_1, x_2,\dots)\}$ with $(k=0,1,\dots)$ as the coefficients of the generating function
\begin{gather}\label{Qk1}
Q(u)=\sum_{k\ge 0} \frac{Q_k}{u^k}=\prod_i \frac{u+x_i}{u-x_i}.
\end{gather}
From Lemma \ref{Lemma1} and (\ref{Qk1}) we immediately get relations of the next lemma.
 \begin{Lemma}\label{Lemma2}
\begin{gather*}
Q(u) = E(u) H(u)= R(u)^2, \qquad R(u)=\exp\left(\sum_{n\in \bN_{\rm odd}}\frac{p_{n} }{nu^{n}}\right),
\end{gather*}
where $\bN_{\rm odd}=\{1,3,5,\dots\}$.
\end{Lemma}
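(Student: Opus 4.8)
The plan is to prove the two asserted equalities separately, treating everything as identities of formal power series in $u^{-1}$. The first identity, $Q(u)=E(u)H(u)$, I would read off directly from the infinite-product formulas of Lemma~\ref{Lemma1}. Multiplying $E(u)=\prod_i(1+x_i/u)$ by $H(u)=\prod_i(1-x_i/u)^{-1}$ and combining the two products factor by factor gives $\prod_i\frac{1+x_i/u}{1-x_i/u}=\prod_i\frac{u+x_i}{u-x_i}$, which is precisely the product defining $Q(u)$ in~(\ref{Qk1}). Matching coefficients of $u^{-k}$ then yields the first equality.

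For the second identity, $E(u)H(u)=R(u)^2$, I would instead use the exponential (power-sum) expressions for $H(u)$ and $E(u)$ from Lemma~\ref{Lemma1}. Taking their product adds the exponents, producing $\exp\bigl(\sum_{n\ge1}\frac{(-1)^{n-1}+1}{n}\frac{p_n}{u^n}\bigr)$. The key observation is the parity behaviour of the coefficient $(-1)^{n-1}+1$: it equals $2$ for odd $n$ and $0$ for even $n$, so the even power sums cancel and each odd power sum is doubled. Factoring the resulting $2$ out of the exponent gives $\exp\bigl(2\sum_{n\in\bN_{\rm odd}}\frac{p_n}{nu^n}\bigr)=R(u)^2$, as claimed.

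The computation is short, and the only point demanding attention is the sign bookkeeping in the second step: one must check that it is the even-indexed $p_n$ that drop out and the odd-indexed ones that survive, rather than the reverse, together with the standing convention that these manipulations take place in $\B[[u^{-1}]]$, where the exponential of a sum factors as the product of exponentials. No substantive obstacle arises; the content of the lemma is just that the symmetric recombination of the exponents of $H(u)$ and $E(u)$ isolates exactly the odd power sums, which is what forces $Q(u)$ to be a perfect square.
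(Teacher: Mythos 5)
Your proof is correct and follows exactly the route the paper intends: the paper gives no separate argument for this lemma, stating only that it follows ``immediately'' from Lemma~\ref{Lemma1} and~(\ref{Qk1}), and your two steps (multiplying the product formulas to get $Q(u)=E(u)H(u)$, then multiplying the exponential formulas and noting that $1+(-1)^{n-1}$ vanishes for even $n$ and equals $2$ for odd $n$) are precisely the omitted details.
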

Note that $Q(u) Q(-u)=1$, which implies that $Q_r$ with even $r$ can be expressed algebraically through $Q_r$ with odd $r$:
\begin{gather*}
Q_{2m}=\sum_{r=1}^{m-1}(-1)^{r-1} Q_rQ_{2m-r} +\frac{1}{2} (-1)^{m-1}Q^2_m.
\end{gather*}
More generally, Schur $Q$-functions $Q_\lambda$ labeled by strict partitions are defined as a specialization of Hall--Littlewood polynomials \cite[Chapter~III.2]{Md}.
\begin{Definition}Let $\lambda=(\lambda_1>\lambda_2>\dots> \lambda_l)$ be a strict partition. Let $l\le N$. Schur $Q$-polynomial $Q_\lambda (x_1,\dots, x_N)$ is the symmetric polynomial in variables $x_i$'s defined by the formula
\begin{gather}\label{defQ}
Q_\lambda (x_1,\dots, x_N)=\frac{2^l}{(N-l)!}
\sum_{\sigma\in S_N} \prod_{i=1}^{l}x_{\sigma(i)}^{\lambda_i} \prod_{i<j}\frac{x_{\sigma(i)}+x_{\sigma(j)}}{x_{\sigma(i)}- x_{\sigma(j)}}.
\end{gather}
\end{Definition}
Alternatively, Schur $Q$-polynomial
$Q_\lambda =Q_\lambda (x_1,\dots, x_N)$ for $N>l$ is the coefficient of ${u^{-\lambda_1}\cdots u^{-\lambda_l}}$
in the formal generating function
\begin{gather}\label{genQ}
Q(u_1,\dots, u_l)=\sum_{\lambda_1,\dots, \lambda_l\in \bZ} \frac{Q_\lambda }{u^{\lambda_1}\cdots u^{\lambda_l}}=\prod_{1\le i<j\le l}\frac{u_j-u_i}{u_j+u_i} \prod_{i=1}^{l} Q(u_i),
\end{gather}
where it is understood that
\begin{gather*}
\frac{u_j-u_i}{u_j+u_i}= 1+2\sum_{r\ge 1}(-1)^r u_i ^{r} u_j^{-r},
\end{gather*}
and $Q(u)$ is given by (\ref{Qk1}) \cite[Chapter III, equation~(8.8)]{Md}.
Schur $Q$-polynomials have a stabilization property, hence, one can omit the number $N$ of variables $x_i's$ as long as it is not less than the length of the partition $\lambda$ and consider $Q_\lambda$ as functions of infinitely many variables $(x_1, x_2, \dots )$.

\section[Action of neutral fermions on bosonic space $\B_{\rm odd}$]{Action of neutral fermions on bosonic space $\boldsymbol{\B_{\rm odd}}$}\label{section3}

Consider the subalgebra $\B_{\rm odd}$ of $\B$ generated by odd ordinary Schur $Q$-functions: $\B_{\rm odd}= \bC[Q_1, Q_3,\dots]$. It is known that $\B_{\rm odd}$ is also a polynomial algebra in odd power sums $\B_{\rm odd}= \bC[p_1, p_3,\dots]$ and that Schur $Q$-functions $Q_\lambda$ labeled by strict partitions constitute a linear basis of $\B_{\rm odd}$ \cite[Chapter~III.8, equation~(8.9)]{Md}.

 Define operators $\{\varphi_k\}_{k\in \bZ}$ acting on the coefficients of generating functions $Q(u_1,\dots, u_l)$ by the rule
 \begin{gather}\label {phi}
 \Phi(v) Q(u_1,\dots, u_l)= Q(v, u_1,\dots, u_l)
 \end{gather}
with $\Phi(v)=\sum\limits_{m\in \bZ}{\varphi_m}{v^{-m}}$. Then in the expansion~(\ref{genQ})
\begin{gather*}
\varphi_m\colon \ Q_\lambda \mapsto Q_{(m,\lambda)}.
\end{gather*}
Observe that from (\ref{genQ})
 \begin{gather*}
 ( \Phi(u)\Phi(v) + \Phi(v)\Phi(u) )Q(u_1,\dots, u_l)\\
 \qquad{} =2\left(1+\sum_{r\ge 1}(-1)^r u ^{r} v^{-r} +\sum_{r\ge 1}(-1)^r v ^{r} u^{-r}\right) A(u,v, u_1,\dots, u_l)\\
\qquad{} =2\sum_{r\in\bZ} \left(\frac{-u}{v}\right)^r A(u,v, u_1,\dots, u_l),
 \end{gather*}
 where
 \begin{gather*}
 A(u,v, u_1,\dots, u_l)=\prod_{1\le j\le l}\frac{(u_j-u)(u_j-v)}{(u_j+u)(u_j+v)} Q(u) Q(v) Q(u_1,\dots, u_l).
 \end{gather*}
Using that $\delta(u,v)=\sum\limits_{r\in\bZ} {u^r}{v^{-(r+1)}}$ is a formal delta distribution with the property $ \delta(u,v) a(u)= \delta(u,v)a(v)$ for any formal distribution $a(u)=\sum\limits_{n\in\bZ} a_nu^n$, and that $Q(u)Q(-u)=1$, we get
 \begin{gather}\label{rel1}
 \left( \Phi(u)\Phi(v) + \Phi(v)\Phi(u) \right) Q(u_1,\dots, u_l)= 2v \delta(-u,v) Q(u_1,\dots, u_l).
 \end{gather}
 Since coefficients of the expansion of $Q(u_1,\dots, u_l)$ in powers of $u_1,\dots, u_l$ include Schur $Q$-functions $Q_\lambda$, and the latter form a linear basis of $\B_{\rm odd}$, it follows that (\ref{phi}) provides the action of well-defined operators $\{\varphi_k\}_{k\in \bZ}$
 on $\B_{\rm odd}$:
 \begin{gather*}
\varphi_{k} (Q_\lambda) = Q_{(k,\lambda)}.
\end{gather*}
 Relation (\ref{rel1}) on generating functions is equivalent to the commutation relations
 \begin{gather}\label{neut1}
[\varphi_m, \varphi_n]_+= 2(-1)^m \delta_{m+n,0}\qquad \text{for}\quad m, n\in \bZ.
\end{gather}
 Thus, operators $\{\varphi_i\}_{i\in \bZ}$ and ${1}$ provide the action of Clifford algebra ${\rm Cl}_\varphi$ of neutral fermions on the Fock space $\B_{\rm odd}$.
 Note that for any strict partition $\lambda=(\lambda_1>\lambda_2>\dots >\lambda_l)$
 \begin{gather}\label{ver1}
 Q_{(\lambda_1,\dots \lambda_l)}= \varphi_{\lambda_1}\cdots \varphi_{\lambda_l} ( 1),
 \end{gather}
 or in terms of generating functions,
 \begin{gather}\label{ver2}
 Q(u_1,\dots, u_l)= \Phi(u_1)\cdots \Phi(u_l) (1).
 \end{gather}
 Formulae (\ref{ver1}), (\ref{ver2}) sometimes are called the vertex operator realization of Schur $Q$-functions.

\section[Vertex operator form of formal distribution of neutral fermions]{Vertex operator form of formal distribution\\ of neutral fermions}\label{section4}
 It will be convenient for us to consider $\B_{\rm odd}$ as a subring of the ring of symmetric functions~$\B$. This allows us to recover the celebrated vertex operator form of the formal distribution of neutral fermions $\Phi(u)$ from no-less celebrated properties of generating functions of complete and elementary symmetric functions. All of these properties are discussed in \cite[Chapter~I]{Md}.

The ring of symmetric functions $\B$ possesses a bilinear form $(\cdot,\cdot)$ \cite[Chapter~I, equation~(4.5)]{Md} defined on the linear basis of monomials of power sums labeled by partitions~$\lambda$ and~$\mu$ as
\begin{gather*}(p_{\lambda_1}\cdots p_{\lambda_l},p_{\mu_1}\cdots p_{\mu_l})=z_{\lambda}\delta_{\lambda,\mu} ,\end{gather*}
 where
 $z_\lambda=\prod i^{m_i} m_i!$ and $m_i=m_i(\lambda)$ is the number of parts of~$\lambda$ equal to~$i$.

We will use this form and its restriction to $\B_{\rm odd}$ to define adjoint operators\footnote{Traditionally, one uses rescaled form on $\B_{\rm odd}$ defined as $(p_{\lambda}, p_{\mu})=2^{-l(\lambda)}z_{\lambda}\delta_{\lambda,\mu},$ where $l(\lambda)$ is the number of parts of $\lambda$, but rescaling is not necessary for our purposes, since in the rescaled form $p_n^\perp= n/2\cdot\partial /\partial{p_n}$ (see \cite[Chapter~III.8, Example~11]{Md}).}
of the multiplication operators.
By definition, given an element $f\in \B$, the operator $f^{\perp}$ adjoint to the operator of multiplication by $f$ is given by the rule
\begin{gather*}
\big(f^{\perp} g, h\big)= (g,fh)\qquad \text{for any} \quad g,h\in \B.
\end{gather*}
\cite[Chapter I.5, Example 3]{Md} contains the following statement. Consider a symmetric function $ f=f(p_1,p_2,\dots) $ expressed as a polynomial in power sums~$p_i$. Then the adjoint operator on $\B$ to the multiplication operator by~$f$ is given by
\begin{gather}\label{padj}
f^\perp= f\left(\frac{\partial}{\partial{p_1}},\frac{2\partial}{\partial{p_2}},\frac{3\partial}{\partial{p_3}}, \dots\right).
\end{gather}
In particular $p_n^\perp= n\partial /\partial{p_n}$.

Combine the corresponding adjoint operators of the families $h_k$, $e_k$, $p_k $ and $Q_k$ into generating functions
\begin{gather*}
H^{\perp}(u)=\sum_{k\ge 0} {h_k^{\perp}}{u^k},\qquad E^{\perp}(u)=\sum_{k\ge 0} {e_k^{\perp}}{u^k},\\ P^{\perp}(u)=\sum_{k\ge 1} {p_k^{\perp}}{u_k},\qquad
Q^{\perp}(u)=\sum_{k\ge 0} {Q_k ^{\perp}}{u_k}.
\end{gather*}
Then (\ref{padj}) immediately implies the following relations.
 \begin{Lemma}
\begin{gather*}
H^{\perp}(u)=\exp\left(\sum_{n\ge1}\frac{\partial}{\partial p_n} {u^{n}}\right),\qquad
E^{\perp}(u)=\exp\left(\sum_{n\ge1}{(-1)^{n-1}}\frac{\partial}{\partial p_n} {u^{n}}\right),
\\
Q^{\perp}(u) = E^{\perp}(u) H^{\perp}(u)= R^{\perp}(u)^2,
\qquad
R^{\perp}(u)=\exp\left(\sum_{n\in \bN_{\rm odd}} \frac{\partial}{\partial p_n}{u^{n}}\right),
\end{gather*}
where $\bN_{\rm odd}=\{1,3,5,\dots\}$.
\end{Lemma}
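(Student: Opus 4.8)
The plan is to read the whole statement off from formula~(\ref{padj}), which I interpret structurally: it exhibits the assignment $f\mapsto f^\perp$ as the algebra homomorphism from $\B=\bC[p_1,p_2,\dots]$ into the ring of differential operators that sends each power sum $p_n$ to the operator $n\,\partial/\partial p_n$. This assignment is linear because the bilinear form is bilinear and adjoints depend linearly on $f$, and it is multiplicative because $(fg)^\perp=g^\perp f^\perp$ (the multiplication operators commute, hence so do their adjoints), matching the fact that the substituting operators $n\,\partial/\partial p_n$ commute among themselves. Both properties are exactly what let me apply $\perp$ inside the exponential generating functions.

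First I would treat $H^\perp(u)$. Starting from the closed form of Lemma~\ref{Lemma1}, I expand $H(u)=\exp\big(\sum_{n\ge1}\tfrac1n\,p_n u^{-n}\big)=\sum_{k\ge0}h_k u^{-k}$ as a formal series in $u^{-1}$ whose coefficients are the power-sum polynomials $h_k$. Applying $\perp$ coefficientwise and using that the substitution $p_n\mapsto n\,\partial/\partial p_n$ is a homomorphism commuting with the formal expansion in $u^{-1}$, each factor $\tfrac1n\,p_n u^{-n}$ becomes $\tfrac1n\,(n\,\partial/\partial p_n)u^{-n}=u^{-n}\,\partial/\partial p_n$, so $\sum_{k\ge0}h_k^\perp u^{-k}=\exp\big(\sum_{n\ge1}u^{-n}\,\partial/\partial p_n\big)$. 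Since by definition $H^\perp(u)=\sum_{k\ge0}h_k^\perp u^{k}$ uses the opposite power of $u$, reversing $u^{-1}\mapsto u$ yields the claimed formula. The computation for $E^\perp(u)$ is identical, the only change being the extra sign $(-1)^{n-1}$ carried along from the exponential for $E(u)$ in Lemma~\ref{Lemma1}.

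Finally I would obtain the three identities for $Q^\perp(u)$ from multiplicativity. From Lemma~\ref{Lemma2}, $Q(u)=E(u)H(u)=R(u)^2$ as generating functions, so the coefficients satisfy $Q_k=\sum_{i+j=k}e_ih_j$ and likewise $Q_k=\sum_{i+j=k}r_ir_j$, where $R(u)=\sum_k r_k u^{-k}$. Applying $\perp$ and using $(e_ih_j)^\perp=e_i^\perp h_j^\perp$ and $(r_ir_j)^\perp=r_i^\perp r_j^\perp$, I collect the results into generating functions to get $Q^\perp(u)=E^\perp(u)H^\perp(u)=R^\perp(u)^2$; the same substitution as above gives $R^\perp(u)=\exp\big(\sum_{n\in\bN_{\rm odd}}u^{n}\,\partial/\partial p_n\big)$.

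I do not expect a genuine obstacle here, as the statement is a formal consequence of~(\ref{padj}). The one point that needs care is bookkeeping of the exponent convention: the adjoint generating functions are deliberately defined with positive powers $u^{k}$ while $H(u)$, $E(u)$, $Q(u)$, $R(u)$ carry negative powers $u^{-k}$, and it is precisely this reversal that turns the $u^{-n}$ inside the exponentials into the advertised $u^{n}$. One should also note that passing from the coefficientwise adjoints to the closed exponential form is legitimate because the substitution homomorphism commutes with the formal power-series expansion in $u$.
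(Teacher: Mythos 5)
Your proposal is correct and takes essentially the same route as the paper: the paper's proof consists precisely of applying formula~(\ref{padj}) to the exponential expressions for $H(u)$, $E(u)$, $Q(u)$, $R(u)$ given in Lemmas~\ref{Lemma1} and~\ref{Lemma2}, which is exactly your argument with the details (the homomorphism property of $p_n\mapsto n\,\partial/\partial p_n$ and the $u^{-k}\mapsto u^{k}$ exponent bookkeeping) spelled out. There is no gap.
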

The proof of the next lemma is outlined in \cite[Chapter~I.5, Example~29]{Md}.
\begin{Lemma}The following commutation relations on generating functions of multiplication and adjoint operators acting on~$\B$ hold:
\begin{gather*}
H^{\perp}(u) \circ H(v) = (1-u/v)^{-1}H(u) \circ H^{\perp}(v),\\
H^{\perp}(u) \circ E(v) = (1+u/v)E(u) \circ H^{\perp}(v),\\
E^{\perp}(u) \circ H(v) = (1+u/v) H(u) \circ E^{\perp}(v),\\
E^{\perp}(u) \circ E(v) = (1-u/v)^{-1}E(u) \circ E^{\perp}(v).
\end{gather*}
\end{Lemma}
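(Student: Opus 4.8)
The plan is to reduce all four identities to a single Baker--Campbell--Hausdorff computation, using the bosonic expressions for the generating functions recorded in the two preceding lemmas. On the power-sum basis, multiplication by $H(v)$ and $E(v)$ becomes $\exp\bigl(\sum_{n\ge1}\tfrac1n p_n v^{-n}\bigr)$ and $\exp\bigl(\sum_{n\ge1}\tfrac{(-1)^{n-1}}{n}p_n v^{-n}\bigr)$, while the adjoint operators are $H^{\perp}(u)=\exp\bigl(\sum_{n\ge1}\tfrac{\partial}{\partial p_n}u^{n}\bigr)$ and $E^{\perp}(u)=\exp\bigl(\sum_{n\ge1}(-1)^{n-1}\tfrac{\partial}{\partial p_n}u^{n}\bigr)$. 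The operators of multiplication by $p_n$ and of differentiation $\partial/\partial p_n$ satisfy the canonical relations $[\partial/\partial p_n,\,p_m]=\delta_{nm}$ and $[p_n,p_m]=[\partial/\partial p_n,\partial/\partial p_m]=0$, so each of $H(v)$, $E(v)$, $H^{\perp}(u)$, $E^{\perp}(u)$ is the exponential of a linear combination of these Heisenberg generators.

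The engine of the proof is the elementary fact that if $[A,B]$ is a scalar (hence central), then $e^{A}e^{B}=e^{[A,B]}\,e^{B}e^{A}$; I would first record this, deducing it from $e^{A}Be^{-A}=B+[A,B]$. Applying it with $A=\sum_{n\ge1}u^{n}\partial/\partial p_n$ and $B=\sum_{m\ge1}\tfrac1m v^{-m}p_m$, the central commutator is $[A,B]=\sum_{n\ge1}\tfrac1n (u/v)^{n}=-\log(1-u/v)$, so $e^{[A,B]}=(1-u/v)^{-1}$; moving $H^{\perp}(u)$ through $H(v)$ therefore reproduces the first displayed relation with prefactor $(1-u/v)^{-1}$. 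The remaining three cases are identical up to the signs $(-1)^{n-1}$ attached to $A$ and/or $B$: these change the summed series to $\pm\log(1\pm u/v)$ and hence yield the prefactors $(1+u/v)$, $(1+u/v)$ and $(1-u/v)^{-1}$, exactly as stated.

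The only inputs beyond bookkeeping are the summations $\sum_{n\ge1}\tfrac1n x^{n}=-\log(1-x)$ and $\sum_{n\ge1}\tfrac{(-1)^{n-1}}{n}x^{n}=\log(1+x)$ with $x=u/v$. The main point requiring care is not conceptual but formal: the operators are exponentials of infinite sums, so I would verify that they act in a locally finite (well-defined) manner on each power-sum monomial, that the scalar $(1\mp u/v)^{\mp1}$ is interpreted through its expansion in nonnegative powers of $u/v$ --- matching the fact that $H^{\perp}(u)$ carries positive powers of $u$ while $H(v)$ carries negative powers of $v$ --- and that the commutation identity, verified on finite truncations, passes to the limit coefficient by coefficient. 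That formal verification, rather than any genuine difficulty, is where the bulk of the work lies.
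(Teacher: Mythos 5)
Your argument is correct, but there is essentially nothing in the paper to compare it against: the paper gives no proof of this lemma, deferring entirely to Macdonald (Chapter~I.5, Example~29). Your route --- writing $H(v)$, $E(v)$ as exponentials of linear forms in the $p_n$ and $H^{\perp}(u)$, $E^{\perp}(u)$ as exponentials of linear forms in the $\partial/\partial p_n$ (both taken from the two preceding lemmas of the paper), then applying $e^{A}e^{B}=e^{[A,B]}e^{B}e^{A}$ for central $[A,B]$ --- is the standard boson normal-ordering computation, and all four commutators come out right: $-\log(1-u/v)$, $\log(1+u/v)$, $\log(1+u/v)$, $-\log(1-u/v)$, giving the four stated prefactors. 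Your attention to the formal-series issues (locally finite action, expansion of the prefactors in nonnegative powers of $u/v$) is appropriate. So you have supplied a complete, self-contained proof where the paper has only a citation; it is also close in spirit to the route the citation points to, since $\exp\bigl(\sum_{n\ge1}u^{n}\partial/\partial p_n\bigr)$ is exactly the shift $p_n\mapsto p_n+u^{n}$, i.e., the ``adjoin an extra variable $u$'' interpretation of $H^{\perp}(u)$.

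One discrepancy you should flag rather than paper over: your computation yields $H^{\perp}(u)\circ H(v)=(1-u/v)^{-1}\,H(v)\circ H^{\perp}(u)$, with each operator keeping its own variable, whereas the printed right-hand side is $(1-u/v)^{-1}H(u)\circ H^{\perp}(v)$, with $u$ and $v$ exchanged. The printed version is a typo, not an alternative convention: the left-hand side involves only nonnegative powers of $u$ and nonpositive powers of $v$, while $(1-u/v)^{-1}H(u)\circ H^{\perp}(v)$ has, for instance, coefficient $\sum_{m\ge0}h_{m+1}h_m^{\perp}$ at $u^{-1}v^{0}$, which sends $1\mapsto h_1\neq0$, so the two sides cannot agree as formal series. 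Moreover, the unswapped form is the one the paper actually uses later: in the proof of the vertex-operator Proposition, $R(-v)^{\perp}$ is passed through $\prod_i Q(u_i)$ producing scalar factors while each $Q(u_i)$ keeps its argument. So your proof establishes the intended statement; just state explicitly that you are proving the corrected form (and likewise for the Corollary that follows) rather than claiming to reproduce the first displayed relation verbatim.
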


\begin{Corollary}\label{cor}
\begin{gather*}
H^{\perp}(u) \circ Q(v)= \frac{v+u}{v-u}Q(u) \circ H^{\perp}(v),\\
E^{\perp}(u) \circ Q(v)= \frac{v+u}{v-u}Q(u) \circ E^{\perp}(v),\\
R^{\perp}(u) \circ Q(v)= \frac{v+u}{v-u}Q(u) \circ R^{\perp}(v).
\end{gather*}
\end{Corollary}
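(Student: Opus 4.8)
The plan is to reduce all three identities to the factorizations $Q(v)=E(v)H(v)=R(v)^2$ of Lemma~\ref{Lemma2}, and then to slide the relevant adjoint generating function across the two factors of $Q(v)$ one at a time, invoking in each step one of the commutation relations established in the previous Lemma. The recurring phenomenon that makes everything collapse to a single clean factor is the scalar bookkeeping: the two prefactors that can occur, $(1+u/v)$ and $(1-u/v)^{-1}$, always combine as
\[
(1+u/v)(1-u/v)^{-1}=\frac{1+u/v}{1-u/v}=\frac{v+u}{v-u},
\]
which is exactly the prefactor claimed in the statement.

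For the first identity I would start from $H^{\perp}(u)\circ Q(v)=H^{\perp}(u)\circ E(v)\circ H(v)$ and commute $H^{\perp}(u)$ first across $E(v)$ (producing the factor $(1+u/v)$) and then across $H(v)$ (producing the factor $(1-u/v)^{-1}$). Recombining $E\circ H$ back into $Q$ and multiplying the two scalar factors as displayed above gives the asserted relation. The second identity is entirely parallel: again from $Q(v)=E(v)\circ H(v)$ one commutes $E^{\perp}(u)$ across $E(v)$ (now the factor $(1-u/v)^{-1}$) and then across $H(v)$ (now the factor $(1+u/v)$); the product of prefactors is the same, so the same $\frac{v+u}{v-u}$ emerges.

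The third identity is the one that needs an extra ingredient, since the preceding Lemma supplies commutation relations only for $H^{\perp}$ and $E^{\perp}$, not for $R^{\perp}$. I would first record the elementary commutation relation of $R^{\perp}$ against $R$, obtained by exactly the exponential computation underlying the Lemma but restricted to odd power sums: from $R(v)=\exp\big(\sum_{n\in\bN_{\rm odd}}p_n/(nv^{n})\big)$ and $R^{\perp}(u)=\exp\big(\sum_{n\in\bN_{\rm odd}}(\partial/\partial p_n)u^{n}\big)$ the relevant bracket is
\[
\sum_{n\in\bN_{\rm odd}}\frac{1}{n}\left(\frac{u}{v}\right)^{n}=\frac12\log\frac{v+u}{v-u},
\]
so that $R^{\perp}(u)\circ R(v)=\big(\frac{v+u}{v-u}\big)^{1/2}R(v)\circ R^{\perp}(u)$. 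Passing $R^{\perp}(u)$ across the two copies of $R(v)$ in $Q(v)=R(v)^2$ then yields the square-root factor twice, recovering the full $\frac{v+u}{v-u}$.

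I expect the only real subtlety to be the half-integer exponent in the third case: one must check that it is consistent, which it is precisely because $R^{\perp}(u)$ is passed across $R(v)$ exactly twice and the two passes commute, and that every scalar prefactor is expanded as a formal power series in $u/v$ so that the compositions of generating functions are well-defined formal operators with no ordering ambiguity. An alternative that sidesteps the square root is to combine the first two identities through $Q^{\perp}(u)=E^{\perp}(u)\circ H^{\perp}(u)$ acting on $Q(v)$, which produces $\big(\frac{v+u}{v-u}\big)^{2}$, and then to extract the $R^{\perp}$ statement; but the direct computation above is cleaner and mirrors the structure of the Lemma it rests on.
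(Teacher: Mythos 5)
Your proof is correct, and for the first two identities it is essentially the paper's own argument: factor $Q(v)=E(v)H(v)$, slide the adjoint series across the two factors using the preceding Lemma, and combine the prefactors $(1+u/v)(1-u/v)^{-1}=\frac{v+u}{v-u}$. (One bookkeeping remark: your sliding argument yields $\frac{v+u}{v-u}\,Q(v)\circ H^{\perp}(u)$, each operator keeping its own variable, whereas the statement prints $Q(u)\circ H^{\perp}(v)$; the same interchange of variables already appears in the paper's Lemma, and your version is the internally consistent one -- it is also the form actually used later in the proof of the vertex-operator Proposition -- so this discrepancy is not a defect of your argument.) Where you genuinely depart from the paper is the third identity. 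You prove it directly: the exponents of $R^{\perp}(u)$ and $R(v)$ have scalar commutator $\sum_{n\in\bN_{\rm odd}}\frac{1}{n}(u/v)^{n}=\frac{1}{2}\log\frac{v+u}{v-u}$, whence $R^{\perp}(u)\circ R(v)=\big(\frac{v+u}{v-u}\big)^{1/2}R(v)\circ R^{\perp}(u)$, and passing $R^{\perp}(u)$ across both factors of $Q(v)=R(v)^{2}$ squares this away. The paper instead observes that $H^{\perp}(u)|_{\B_{\rm odd}}=E^{\perp}(u)|_{\B_{\rm odd}}=R^{\perp}(u)|_{\B_{\rm odd}}$, because elements of $\B_{\rm odd}$ do not involve the even power sums $p_{2r}$, so the derivations $\partial/\partial p_{2r}$ can be freely inserted into the exponent; the third identity is then just the first one read on $\B_{\rm odd}$. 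Your route is self-contained and gives the relation as an operator identity on all of $\B$, and the half-integer exponent is indeed harmless, being $\exp$ of $\tfrac12\log$ of a formal series with constant term $1$; the paper's route is shorter and isolates precisely the observation ($R^{\perp}=H^{\perp}$ on $\B_{\rm odd}$) that it reuses immediately afterwards in deriving the vertex-operator form of $\Phi(v)$. You are also right to prefer your direct computation over your sketched alternative via $Q^{\perp}(u)=E^{\perp}(u)\circ H^{\perp}(u)$, since extracting a square root from an operator identity would require an extra justification.
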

\begin{proof}For the first and second one we use that $Q(u) = E(u) H(u)$. Observe that
\begin{gather*}
H^{\perp}(u)|_{\B_{\rm odd}}=E^{\perp}(u)|_{\B_{\rm odd}}=R^{\perp}(u)|_{\B_{\rm odd}}.
\end{gather*}
In other words, since $Q_k$ does not depend on even power sums~$p_{2r}$, we can add terms $ {\partial}/{\partial p_{2r}}$ in the sum under the exponent when applying to elements of~$\B_{\rm odd}$:
\begin{gather*}
R(u)^{\perp}(Q(v)) = \exp\left(\sum_{n\in \bN_{\rm odd}} \frac{\partial}{\partial p_n}{u^{n}}\right) Q(v)
= \exp\left(\sum_{n\ge 1} \frac{\partial}{\partial p_n}{u^{n}}\right) Q(v)= H^{\perp}(u) Q(v).\tag*{\qed}
\end{gather*}\renewcommand{\qed}{}
\end{proof}

We arrive at the vertex operator form of formal distribution of neutral fermions.
\begin{Proposition}
\begin{gather}\label{phiqr}
\Phi(v)= Q(v) R(-v)^{\perp}= \exp\left(\sum_{n\in \bN_{\rm odd}}\frac{2p_{n}}{n}\frac{1}{v^{n}}\right)\exp\left(-\sum_{n\in \bN_{\rm odd}} \frac{\partial}{\partial p_n}{v^{n}}\right).
\end{gather}
\end{Proposition}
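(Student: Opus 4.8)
The plan is to treat the second displayed equality as a matter of unwinding definitions and to concentrate the work on the operator identity $\Phi(v)=Q(v)R(-v)^{\perp}$. For the second equality, Lemma~\ref{Lemma2} gives $Q(v)=R(v)^2=\exp\bigl(2\sum_{n\in\bN_{\rm odd}}\frac{p_n}{n v^{n}}\bigr)$ as a multiplication operator, while substituting $u=-v$ into the formula for $R^{\perp}(u)$ and using that each $n$ is odd yields $R(-v)^{\perp}=\exp\bigl(-\sum_{n\in\bN_{\rm odd}}\frac{\partial}{\partial p_n}v^{n}\bigr)$; the product of these two exponentials is exactly the right-hand side. So the whole statement reduces to identifying the candidate operator $\widetilde\Phi(v):=Q(v)R(-v)^{\perp}$ with the neutral fermion distribution $\Phi(v)$.

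To prove $\widetilde\Phi(v)=\Phi(v)$, I would argue that both are operators on $\B_{\rm odd}$ and compare them on a spanning set. Since the coefficients $Q_\lambda$ of the generating functions $Q(u_1,\dots,u_l)$, as $l$ and the strict partition $\lambda$ vary, form a linear basis of $\B_{\rm odd}$, it suffices to verify that $\widetilde\Phi(v)$ reproduces the defining relation (\ref{phi}) of $\Phi(v)$, namely
\[
\widetilde\Phi(v)\,Q(u_1,\dots,u_l)=Q(v,u_1,\dots,u_l).
\]
Here $\widetilde\Phi(v)$ acts coefficientwise on the formal distribution $Q(u_1,\dots,u_l)$, which by (\ref{genQ}) equals the $x$-independent scalar prefactor $\prod_{1\le i<j\le l}\frac{u_j-u_i}{u_j+u_i}$ times the product of multiplication operators $Q(u_1)\cdots Q(u_l)$.

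The computational heart is moving the adjoint factor $R(-v)^{\perp}$ of $\widetilde\Phi(v)$ to the right past the multiplication operators $Q(u_i)$. Applying the commutation relation of Corollary~\ref{cor} between $R^{\perp}$ and $Q$ with arguments $-v$ and $u_i$, each passage produces the scalar factor $\frac{u_i-v}{u_i+v}$ and leaves $Q(u_i)$ in place; after clearing all $l$ factors, $R(-v)^{\perp}$ meets the constant $1$ and acts as the identity, since every $\partial/\partial p_n$ annihilates $1$. Keeping the multiplication operator $Q(v)$ on the far left, this gives
\[
\widetilde\Phi(v)\,Q(u_1,\dots,u_l)=\prod_{1\le i<j\le l}\frac{u_j-u_i}{u_j+u_i}\ \prod_{i=1}^{l}\frac{u_i-v}{u_i+v}\ Q(v)\,Q(u_1)\cdots Q(u_l).
\]
It then remains to recognize the right-hand side as $Q(v,u_1,\dots,u_l)$: writing out (\ref{genQ}) for the $l+1$ variables $(v,u_1,\dots,u_l)$ and separating the pair factors that involve $v$ from those that do not produces exactly the product $\prod_{i}\frac{u_i-v}{u_i+v}$ times the $l$-variable prefactor, so the two expressions coincide.

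The main obstacle I anticipate is not conceptual but bookkeeping: getting the commutation scalar exactly right. One must check that commuting $R^{\perp}$ past $Q$ contributes the factor $\frac{v+u}{v-u}$ recorded in Corollary~\ref{cor} (which traces back to $2\sum_{n\in\bN_{\rm odd}}\frac1n t^{n}=\log\frac{1+t}{1-t}$), and that the substitution $u\mapsto -v$ flips this to $\frac{u_i-v}{u_i+v}$ with the correct sign; a sign slip here would misalign the prefactor with the one coming from (\ref{genQ}). A secondary point worth confirming is that $\widetilde\Phi(v)$ is genuinely well defined on $\B_{\rm odd}=\bC[p_1,p_3,\dots]$, which holds because both $Q(v)$ and $R(-v)^{\perp}$ involve only odd power sums and their derivatives.
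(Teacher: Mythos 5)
Your proof is correct and takes essentially the same route as the paper: both reduce the claim to checking $Q(v)R(-v)^{\perp}\,Q(u_1,\dots,u_l)=Q(v,u_1,\dots,u_l)$ on the coefficients of the generating functions (which span $\B_{\rm odd}$), and both perform this check by commuting $R(-v)^{\perp}$ past each multiplication operator $Q(u_i)$ via Corollary~\ref{cor} and letting it act trivially on the constant $1$. Your careful bookkeeping of the resulting factor $\prod_{i}\frac{u_i-v}{u_i+v}$, matching the pair factors in the $(l+1)$-variable generating function, is exactly the computation in the paper's proof.
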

\begin{proof}From Corollary \ref{cor}, the action of the operator $ Q(v) R(-v)^{\perp}$ on the coefficients of generating function $Q(u_1,\dots, u_l)$
coincides with the action of~$\Phi(v)$:
\begin{gather*}
 Q(v) R(-v)^{\perp} ( Q(u_1,\dots, u_l) ) = Q(v) \prod_{1\le i<j\le l}\frac{u_j-u_i}{u_j+u_i}\, R(-v)^{\perp}\left(
 \prod_{i=1}^{l} Q(u_i)\right)\\
\qquad{} = Q(v) \prod_{1\le j<i\le l}\frac{u_j-u_i}{u_j+u_i} \prod_{i=1}^{l}\frac{v-u_i}{v+u_i} \prod_{i=1}^{l} Q(u_i) =Q(v,u_1,\dots, u_l).
\end{gather*}
Since coefficients of $Q(u_1,\dots, u_l)$ contain a linear basis of $\B_{\rm odd}$, the equality~(\ref{phiqr}) follows.
\end{proof}

\section{The neutral fermions bilinear identity}\label{section5}

Let
\begin{gather*}
\Omega= \sum_{n}\varphi_n\otimes (-1)^n \varphi_{-n}.
\end{gather*}
One looks for the solutions in $\B_{\rm odd}$ of {\it the neutral fermions bilinear identity}
\begin{gather}\label{binBKP}
\Omega ( \tau\otimes \tau)= \tau\otimes\tau,
\end{gather}
where $\tau= \tau(\tilde p)= \tau ( 2p_1, 2p_3/3, 2p_5/5,\dots )$.
It is known \cite{DKTIV,DKTA,DKTII,JM} that (\ref{binBKP}) is equivalent to an infinite integrable system of partial differential equations called the BKP hierarchy.
Further in Section~\ref{section6} a simple observation explains, why Schur $Q$-functions constitute solutions of the neutral fermions bilinear identity, and hence of the BKP hierarchy.

In this section we would like to make a small deviation and review the steps of recovering the BKP hierarchy of partial differential equations in the Hirota form from the neutral fermions bilinear identity. This is certainly a well-known procedure. However, the explicit calculations are often omitted in the literature, and we would like to provide them here for the convenience of the reader.

Note that $\Omega$ is the constant coefficient of the formal distribution $\Phi(u)\otimes \Phi(-u)$, or, in terms of residue,
\begin{gather}\label{resid}
\Omega=\mathop{\operatorname{Res}}\limits_{u=0} \frac{1}{u} \Phi(u)\otimes \Phi(-u).
\end{gather}
We identify $\B_{\rm odd}\otimes \B_{\rm odd}$ with $\bC[p_1,p_3,\dots]\otimes \bC[r_1,r_3,\dots]$ -- two copies of polynomial rings, where variables in each of them play role of power sum symmetric functions. Set $ \tilde p= ( 2p_1, 2p_3/3,\allowbreak 2p_5/5,\dots )$, $ \tilde r=(2r_1,2r_3/3, 2r_5/5,\dots)$.
Then $\partial_{p_n}=2\partial_{\tilde p_n}/n$ and
\begin{gather*}
\Phi(u)\tau\otimes \Phi(-u) \tau = \exp\left(\sum_{n\in \bN_{\rm odd}}(\tilde p_{n}- \tilde r_n)\frac{1}{u^{n}}\right)\\
\hphantom{\Phi(u)\tau\otimes \Phi(-u) \tau =}{}\times \exp\left(-\sum_{n\in \bN_{\rm odd}}\frac{2}{n}\left(\frac{\partial}{\partial \tilde p_n}-\frac{\partial}{\partial \tilde r_n}\right){u^{n}}\right)\tau( \tilde p)\tau(\tilde r).
\end{gather*}
Introduce the change of variables
\begin{gather*}
\tilde p_n=x_n-y_n,\qquad \tilde r_n=x_n+y_n.
\end{gather*}
Then
\begin{gather*}
\Phi(u)\tau\otimes \Phi(-u) \tau
= \exp\left(\sum_{n\in \bN_{\rm odd}} -2y_n \frac{1}{u^n}\right)\exp\left(\sum_{n\in \bN_{\rm odd}}\frac{2} {n}\frac{\partial}{\partial y_n}u^{n}\right)\tau( x- y)\tau( x+ y)
\end{gather*}
with $(x\pm y)=(x_1\pm y_1, x_3\pm y_3, x_5\pm y_5,\dots)$.

\begin{Definition} Let $P(D )$ be a multivariable polynomial in the collection of variables $D=(D_1, D_2,\dots)$, let $f(x)$, $g(x)$ be differentiable functions in $ x =(x_1,x_2,\dots)$.

The {\it Hirota derivative} $P( D) f \cdot g$ is a function in variables $(x_1,x_2,\dots)$ given by the expression
\begin{gather*}
P( D) f \cdot g =P(\partial_{z_1},\partial_{z_1},\dots)f (x+ z)g( x- z)|_{ z=0},
\end{gather*}
where $ x\pm z =(x_1\pm z_1,x_2\pm z_2,\dots)$.
\end{Definition}
For example,
\begin{gather*}
 D_i^n f \cdot g=\sum_{k=0}^{n} (-1)^k {n\choose k} \frac{\partial ^k f}{\partial x_i^k}\frac{\partial ^{n-k} g}{\partial x_i^{n-k}},
\end{gather*}
which implies in particular that odd Hirota derivatives are tautologically zero when $f=g$:
\begin{gather*}
 D_i^{2n+1} f \cdot f =0\qquad \text{for} \quad n=0,1,2,\dots.
\end{gather*}
The following lemma allows one to rewrite bilinear identity (\ref{binBKP}) in terms of the Hirota derivatives.
\begin{Lemma}
\begin{gather*}
\exp\left(\sum_{n\in \bN_{\rm odd}}\frac{2}{n}\frac{\partial}{\partial y_n}{u^{n}}\right)\tau( x- y)\tau(x+y)
=\exp\left(\sum_{n\in \bN_{\rm odd}}\left(y_n+\frac{2}{n}u^n\right)D_n\right)\tau \cdot\tau.
\end{gather*}
\end{Lemma}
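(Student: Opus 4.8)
The plan is to evaluate both sides separately by recognizing each exponential of first-order differential operators as a translation operator, via the Taylor shift formula $\exp\big(\sum_n c_n \partial_{z_n}\big) F(z) = F(z+c)$, which is valid here at the level of formal power series in $u$ and in the remaining variables.

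First I would unwind the right-hand side. Writing $c_n = y_n + \frac{2}{n}u^n$ for $n \in \bN_{\rm odd}$, the definition of the Hirota derivative gives
\[
\exp\left(\sum_{n\in\bN_{\rm odd}} c_n D_n\right)\tau\cdot\tau = \exp\left(\sum_{n\in\bN_{\rm odd}} c_n \frac{\partial}{\partial z_n}\right)\tau(x+z)\,\tau(x-z)\Big|_{z=0},
\]
where $(x\pm z)_n = x_n \pm z_n$. Applying the Taylor shift to the function $F(z)=\tau(x+z)\tau(x-z)$ and then setting $z=0$ produces $F(c) = \tau(x+c)\tau(x-c)$, whose two factors have $n$-th entries $x_n + y_n + \frac{2}{n}u^n$ and $x_n - y_n - \frac{2}{n}u^n$ respectively.

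Next I would treat the left-hand side. The operator $\exp\big(\sum_{n\in\bN_{\rm odd}} \frac{2}{n}u^n\, \partial/\partial y_n\big)$ is itself a translation, this time in the variables $y_n$, sending $y_n \mapsto y_n + \frac{2}{n}u^n$. Applied to $\tau(x-y)\tau(x+y)$ it yields a product whose two factors have $n$-th entries $x_n - y_n - \frac{2}{n}u^n$ and $x_n + y_n + \frac{2}{n}u^n$. Comparing with the right-hand side, the two products agree factor-for-factor up to the order of multiplication, so the identity follows by commutativity.

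The computation is essentially bookkeeping, and the only point requiring care is the consistent tracking of signs and of the coefficient $\frac{2}{n}$ across the two shifts: on the right the displacement $c_n$ already absorbs the summand $y_n$, so the $z$-translation reproduces the full argument $y + \frac{2}{n}u^n$ in each $\tau$, whereas on the left the same total displacement is split between the preexisting argument $y$ and the translation $\frac{2}{n}u^n$. Once this matching is confirmed, no further input is needed beyond the formal validity of the Taylor shift, which holds because each exponent is a sum of commuting first-order operators with coefficients lying in the ring of formal power series in $u$.
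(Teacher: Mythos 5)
Your proof is correct and takes essentially the same approach as the paper: both rest on the Taylor shift formula $\exp\big(\sum_n c_n\partial_{z_n}\big)F(z)=F(z+c)$ applied twice, once as a translation in the $y$-variables and once to unwind the Hirota exponential through auxiliary shift variables, meeting at the common expression $\tau(x+y+\tilde u)\,\tau(x-y-\tilde u)$ with $\tilde u = \big(2u, 2u^3/3, 2u^5/5,\dots\big)$. The only cosmetic difference is that you evaluate the two sides separately to this middle expression and compare, whereas the paper transforms the left side into the right in one chain.
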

\begin{proof}By the Taylor series expansion,
\begin{gather}\label{taylor}
{\rm e}^{a\partial/\partial_y} g(y)=\sum_{n=0}^{\infty} \frac{ a^ng^{(n)} (y)} {n!}= g(y+a).
\end{gather}
Applying (\ref{taylor}) twice with $ t=(t_1,t_3, t_5,\dots)$, $ \tilde u=\big(2u, 2u^3/3, 2u^5/5,\dots \big)$,
\begin{gather*}
\exp\left(\sum_{n\in \bN_{\rm odd}}\frac{2}{n}\frac{\partial}{\partial y_n} u^n\right) \tau( x- y)\tau( x+y)= \tau( x+ y+ \tilde u)\tau( x- y-\tilde u)\\
\qquad{} =\tau(x+ y +\tilde u+ t)\tau(x-( y+\tilde u+t))|_{ t=0}\\
\qquad{} =\left.\exp\left(\sum_{n\in \bN_{\rm odd}}\left(y_n+\frac{2}{n}u^n\right)\frac{\partial}{\partial t_n }\right)\tau( x+ t)\tau( x- t)\right\vert_{ t=0}.\tag*{\qed}
\end{gather*}\renewcommand{\qed}{}
\end{proof}

Thus, we can write in terms of Hirota derivatives
\begin{gather}
\Phi(u)\tau\otimes \Phi(-u) \tau =\exp\left(\sum_{n\in \bN_{\rm odd}} \frac{-2y_n}{u^n}\right)\nonumber\\
\hphantom{\Phi(u)\tau\otimes \Phi(-u) \tau =}{}\times \exp\left(\sum_{n\in \bN_{\rm odd}}\frac{2}{n}D_n u^n\right) \exp\left(\sum_{n\in \bN_{\rm odd}}y_nD_n\right) \tau\cdot\tau.\label{eq1}
\end{gather}

In order to compute $\mathop{\operatorname{Res}}\limits_{u=0} \frac{1}{u} \Phi(u)\tau\otimes \Phi(-u)\tau$, which is just the coefficient of~$u^0$ of $ \Phi(u)\tau\otimes \Phi(-u) \tau $, we recall the following well-known facts on the composition of exponential series with generating series. Their proofs can be done, e.g., by induction, or again found in~\cite[Chapter~I]{Md}.
\begin{Proposition}\label{XY} Let $S(u)=\sum\limits_{k=0}^\infty S_k \frac{1}{u^k}$ and $X(u)=\sum\limits_{k=1}^\infty X_k \frac{1}{u^k}$ be related by
\begin{gather*}
\exp ( X(u)) =S(u).
\end{gather*}
Then the following statements hold
\begin{gather*}
S_k=\sum_{s=1}^{k}\sum_{l_1+2l_2+\dots +sl_s=k,\, l_i\ge 1}\frac{1}{l_1!\cdots l_s!}X_1^{l_1}\cdots X_s^{l_s},\\
S_k=\det\frac{1}{n!}
\begin{pmatrix}
X_1 &-1&0&0&\dots&0\\
2X_2 &X_1&-2&0&\dots&0\\
3X_3 &2X_2&X_1&-3&\dots&0\\
\dots&\dots&\dots&\dots&\dots&0\\
kX_k &(k-1)X_{k-1}&(k-2)X_{k-2}&(k-3)X_{k-3}&\dots&X_1\\
\end{pmatrix},\\
X_k=\frac{(-1)^{k-1}}{k} \det
\begin{pmatrix}
S_1 &1&0&0&\dots&0\\
2S_2 &S_1&1&0&\dots&0\\
3S_3 &S_2&S_1&1&\dots&0\\
\dots&\dots&\dots&\dots&\dots&0\\
kS_k &S_{k-1}&S_{k-2}&S_{k-3}&\dots&S_1\\
\end{pmatrix} .
\end{gather*}
\end{Proposition}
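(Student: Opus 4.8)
The plan is to treat the three formulas in turn, all of them flowing from the single relation $\exp(X(u))=S(u)$ together with its logarithmic derivative.

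First, for the explicit multinomial sum I would simply expand the exponential as $\exp(X(u))=\sum_{n\ge 0}\frac{1}{n!}X(u)^n$ and substitute $X(u)=\sum_{j\ge 1}X_j u^{-j}$. Extracting the coefficient of $u^{-k}$ amounts to recording, for each $n$, how many times each $X_j$ is selected from the $n$ factors: if $X_j$ occurs $l_j$ times, then the exponents force $\sum_j j\,l_j=k$ while $\sum_j l_j=n$, and the multinomial coefficient $\frac{n!}{l_1!\cdots l_s!}$ cancels the $\frac{1}{n!}$. Summing over $n$ (equivalently over the largest index $s$ appearing) leaves precisely $\frac{1}{l_1!\cdots l_s!}X_1^{l_1}\cdots X_s^{l_s}$ ranged over the partitions of $k$, which is the first identity.

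Second, I would reach the two determinant identities through the recurrence obtained by differentiating $S(u)=\exp(X(u))$ in $u$. Since $S'(u)=X'(u)S(u)$, with $S'(u)=-\sum_k kS_k u^{-k-1}$ and $X'(u)=-\sum_j jX_j u^{-j-1}$, comparison of the coefficients of $u^{-k-1}$ yields the fundamental recurrence
\[
kS_k=\sum_{j=1}^k jX_j\,S_{k-j},\qquad S_0=1.
\]
Read as a lower Hessenberg linear system in the unknowns $S_1,\dots,S_k$ with the $X_j$ as data, Cramer's rule (or, more self-containedly, induction on $k$ via cofactor expansion along the bottom row) shows that the displayed matrix built from the $X_j$ has determinant equal to $k!\,S_k$; this is the second formula, the prefactor being read as $1/k!$. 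Reading the \emph{same} recurrence instead as a triangular, unipotent system in the unknowns $X_1,\dots,X_k$ with the $S_j$ as data gives the third determinant, the sign $(-1)^{k-1}$ and the factor $1/k$ arising from isolating $X_k$ from $kX_k=kS_k-\sum_{j=1}^{k-1}jX_jS_{k-j}$.

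The only real work is organizational: keeping the factorial normalizations and the superdiagonal entries $-1,-2,\dots$ (respectively $+1$) aligned with the weights $j$ so that the cofactor expansion reproduces exactly the coefficients $jX_j$ (respectively $S_j$) of the recurrence. No analytic difficulty arises. As a shortcut one may instead quote the classical Newton-type identities relating complete and power-sum symmetric functions from \cite[Chapter~I]{Md}, since the specialization $X_n=p_n/n$ turns $S(u)=\exp(X(u))$ into $H(u)$ of Lemma~\ref{Lemma1}, the matrix entries $jX_j$ into $p_j$, and $S_k$ into $h_k$, making the two determinants the standard $h$--$p$ conversion formulas.
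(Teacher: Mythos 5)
Your proposal is correct, and it actually supplies more than the paper does: the paper gives no written proof of Proposition~\ref{XY}, remarking only that ``their proofs can be done, e.g., by induction, or again found in \cite[Chapter~I]{Md}.'' Your argument is the standard one that fills in that citation: the multinomial expansion of $\exp(X(u))$ gives the first identity, and the logarithmic-derivative recurrence $kS_k=\sum_{j=1}^{k} jX_j S_{k-j}$, $S_0=1$, read once as a Hessenberg system in the $S_i$ and once as a unipotent triangular system in the $X_i$, gives the two determinants via Cramer's rule (one checks, e.g., $k=2,3$ directly, and the cofactor expansion reproduces the recurrence). Your closing observation --- that under $X_n=p_n/n$, $jX_j=p_j$, $S_k=h_k$ these become Macdonald's $h$--$p$ conversion formulas --- is precisely the shortcut the paper itself invokes, so the two routes coincide at that point. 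One detail worth stating explicitly: your expansion yields the sum over all tuples with $l_i\ge 0$ and $l_1+2l_2+\cdots+sl_s=k$ (equivalently, over partitions of $k$), which is the correct statement; the condition $l_i\ge 1$ printed in the proposition is a typo (as written it would omit the term $X_3$ from $S_3$, contradicting the paper's own examples), and the prefactor $\frac{1}{n!}$ in the second display should read $\frac{1}{k!}$, which is how you interpreted it. Your proof implicitly corrects both misprints, which is the right reading.
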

\begin{Example}
\begin{gather*}
S_0=1,\qquad S_1= X_1,\qquad S_2= \frac{1}{2} X_1^2+X_2,\qquad S_3= S_3+X_2X_1 +\frac{1}{6}X_1^3,\\
S_4=X_4+X_3X_1 +\frac{1}{2}X_2^2+\frac{1}{2} X_2X_1^2+\frac{1}{24}X_1^4.
\end{gather*}
By Lemma \ref{Lemma1}, when $X$ variables in these formulae are interpreted as normalized power sums $X_{k}=p_{k}/k$, $S_k$'s are identified with complete symmetric functions~$h_k$'s.
\end{Example}
\begin{Example}\label{eg2} Let $X_{2k}=0$ for $k=1,2,\dots.$ Then the first few $S_n=S_n(X_1, 0, X_3,\dots)$ are given by
\begin{gather*}
S_0 =1,\qquad S_1= X_1,\qquad S_2= \frac{1}{2}X_1^2,\qquad S_3= X_3+\frac{1}{6}X_1^3,\qquad
S_4 =X_3X_1 +\frac{1}{24}X_1^4,\\ S_5= \frac{1}{120}X_1^5+\frac{1}{2}X_1^2X_3+X_5,\qquad
S_6 =\frac{1}{720} X_1^6+\frac{1}{6}X_1^3X_3+\frac{1}{2}X_3^2 +X_1X_5,\\
S_7 =\frac{1}{5040} X_1^7+\frac{1}{24}X_1^4X_3+\frac{1}{2} X_1X_3^2 +\frac{1}{2}X_1X_5 +X_7.
\end{gather*}
Note that by Lemma \ref{Lemma2} when $X$ variables in these formulae are interpreted as odd normalized power sums $X_{2k+1}=2p_{2k+1}/(2k+1)$, $S_k$'s are identified with Schur $Q$-functions $Q_k$'s.
\end{Example}

Using the statement of Proposition \ref{XY}, we can write the coefficient of $u^0$ of (\ref{eq1}) as
\begin{gather}\label{BKP3}
\sum_{m=0}^{\infty}
S_{m}\left( \tilde y \right)
S_{m}\big( \tilde D\big)
\exp\left(\sum_{n\in \bN_{\rm odd}}y_nD_n\right)
\tau \cdot\tau =\tau(x-y)\cdot\tau(x+y),
\end{gather}
where $\tilde y= ( {-2y_1},0, -2y_3, \dots )$, $\tilde D=( 2D_1, 0,2D_3/3, 0,\dots)$.

Note that $S_0=1$ and $\exp\Big(\sum\limits_{n\in \bN_{\rm odd}}y_nD_n\Big) \tau \cdot\tau =\tau(x- y)\cdot\tau( x+ y)$, hence we can rewrite~(\ref{BKP3}) as
\begin{gather}\label{BKP2}
\sum_{m=1}^{\infty} S_{m}(\tilde{y})S_{m}\big( \tilde D\big)\exp\left(\sum_{n\in \bN_{\rm odd}}y_nD_n\right) \tau \cdot\tau=0.
\end{gather}
To obtain the equations of the BKP hierarchy, one expands the left hand side of (\ref{BKP2}) in monomials $y_1^{m_1}y_2^{m_2}\cdots y_N^{m_N}$ to obtain as coefficients Hirota operators in terms of $D_k$'s.

For example, let us compute the coefficient of $y_3^2$. In the expansion of
 \begin{gather*}
\big(S_{1}(\tilde{y})
S_{1}\big( \tilde D\big) + S_{2}(\tilde{y})
S_{2}\big( \tilde D\big)+ S_{3}(\tilde{y})
S_{3}\big( \tilde D\big) +\cdots\big) \left(1+\sum y_i D_i+ \frac{1}{2}\left(\sum y_i D_i\right)^2+\cdots \right)
 \end{gather*}
the term $y_3^2$ appears in $S_{3}(\tilde{y}) S_{3}\big( \tilde D\big)\times y_3D_3$ and in $S_{6}(\tilde{y}) S_{6}\big( \tilde D\big)\times 1$. Using the expansions of Example~\ref{eg2}, the coefficient of $y_3^2$ is
\begin{gather*}
-2S_{3}\big( \tilde D\big)D_3+ 2S_{6}\big( \tilde D\big)= \frac{8}{45}\big(D_1^6- 5D_1D_3 -5 D_3^2+9D_1D_5\big),
\end{gather*}
which provides the Hirota bilinear form of the BKP equation that gives the name to the hierarchy
\begin{gather*}
\big(D_1^6- 5D_1D_3 -5 D_3^2+9D_1D_5\big)\tau\cdot \tau=0.
 \end{gather*}

\begin{Remark}Writing the residue (\ref{resid}) as a contour integral, one gets the BKP in its {\it integral form}
\begin{gather*}
\oint \frac{1}{2\pi {\rm i} u} \exp\left(\sum_{n\in \bN_{\rm odd}}(\tilde p_n-\tilde r_n)\frac{1}{u^n}\right)\exp\left(-\sum_{n\in \bN_{\rm odd}}\frac{2}{n}\left(\frac{\partial}{\partial \tilde p_n}-\frac{\partial}{\partial \tilde r_n}\right){u^{n}}\right)\tau( \tilde p)\tau(\tilde r)\\
\qquad{} =\tau( \tilde p)\tau(\tilde r).
\end{gather*}
\end{Remark}

\section{Commutation relation for the bilinear identity}\label{section6}
Our goal is to show that multiparameter Schur $Q$-functions are solutions of the neutral fermions bilinear identity (\ref{binBKP}), thus they provide solutions of the BKP hierarchy.

Let $X=\sum\limits_{n> 0} A_n\varphi_n$ for some $A_n\in \bC$. From~(\ref{neut1}) one gets $X^2=0$.
\begin{Proposition}
	\begin{gather*}
		\Omega (X\otimes X)= (X\otimes X)\Omega.
	\end{gather*}
\end{Proposition}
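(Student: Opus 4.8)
The plan is to expand both products directly from the definitions $\Omega=\sum_n\varphi_n\otimes(-1)^n\varphi_{-n}$ and $X=\sum_{k>0}A_k\varphi_k$, and to reduce the identity to a statement that can be settled termwise using the single anticommutation relation (\ref{neut1}) together with $X^2=0$. Writing out $\Omega(X\otimes X)$ and $(X\otimes X)\Omega$, each becomes a sum over $n$ of tensor products assembled from the four operators $\varphi_n X$, $\varphi_{-n}X$, $X\varphi_n$, $X\varphi_{-n}$ (up to an overall tensor sign convention, which will turn out not to affect the argument). It therefore suffices to prove that $\sum_n(-1)^n\big[(\varphi_n X)\otimes(\varphi_{-n}X)-(X\varphi_n)\otimes(X\varphi_{-n})\big]=0$.

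First I would record the elementary commutation data. From (\ref{neut1}), the anticommutator $\{\varphi_n,X\}=\varphi_n X+X\varphi_n=\sum_{k>0}A_k\,2(-1)^n\delta_{n+k,0}$ is a scalar: it equals $2(-1)^nA_{-n}$ when $n<0$ and vanishes when $n\ge 0$. The decisive feature is that this scalar is supported only on negative indices $n$.

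Next I would substitute $\varphi_n X=-X\varphi_n+\{\varphi_n,X\}$ (and the analogous relation with $-n$) into the first member of the difference and expand. The biquadratic leading term $X\varphi_n\otimes X\varphi_{-n}$ cancels exactly against the subtracted term, leaving three families of cross terms: one proportional to $(X\varphi_n)\otimes 1$, one proportional to $1\otimes(X\varphi_{-n})$, and one scalar-times-scalar term. After reindexing, the first family resums to $2\,(X\!\cdot\! X)\otimes 1=2X^2\otimes 1$ and the second to $2\cdot 1\otimes X^2$, both of which vanish because $X^2=0$. The scalar-times-scalar family vanishes for a different reason: $\{\varphi_n,X\}$ requires $n<0$ while $\{\varphi_{-n},X\}$ requires $-n<0$, i.e.\ $n>0$, so the two factors are never simultaneously nonzero.

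The routine part is the sign and index bookkeeping in the expansion; I expect the only genuinely delicate point to be the tensor-product sign convention on $\B_{\rm odd}\otimes\B_{\rm odd}$. However, one checks that the surviving cross terms --- and hence the conclusion --- are the same whether or not the Koszul rule is applied, since the vanishing is driven entirely by $X^2=0$ together with the disjoint support of the scalars $\{\varphi_n,X\}$ and $\{\varphi_{-n},X\}$. Collecting the three vanishing families establishes the displayed identity, and therefore $\Omega(X\otimes X)=(X\otimes X)\Omega$.
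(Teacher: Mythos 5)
Your proof is correct and follows essentially the same route as the paper's: expand $\Omega(X\otimes X)$, use the anticommutation relation from~(\ref{neut1}) to move $X$ past each $\varphi_n$, recover $(X\otimes X)\Omega$ from the quadratic terms, kill the cross terms by resumming them into multiples of $X^2\otimes 1$ and $1\otimes X^2$ (zero since $X^2=0$), and kill the scalar--scalar terms by the disjoint support of the two anticommutators (the paper phrases this as $\delta_{m+n,0}=0$ for $m,n>0$). Your aside about the Koszul sign convention is harmless but unnecessary, since the paper uses the ordinary (non-super) tensor product of operators throughout.
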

\begin{proof}
\begin{align*}
\Omega (X\otimes X)&=\sum_{k\in \bZ} \varphi_k X\otimes (-1)^{k}\varphi_{-k} X\\
&=\sum_{k\in \bZ} (-X\varphi_k+[\varphi_k, X]_+)\otimes (-1)^{k}(-X\varphi_{-k} +[\varphi_{-k}, X]_+).
\end{align*}
Note that $[\varphi_{k}, X]_+=\big[\varphi_{k}, \sum\limits_{n>0}A_n\varphi_n\big]_+=2\sum\limits_{n>0}(-1)^n A_n\delta_{k+n,0}$, hence
\begin{gather*}
\Omega (X\otimes X) = (X\otimes X)\Omega - \sum_{n>0}2(-1)^n A_n\otimes (-1)^n X\varphi_n - \sum_{n>0}(-1)^n X \varphi_n \otimes 2(-1)^n A_n\\
\hphantom{\Omega (X\otimes X) =}{} +4\sum_{k\in \bZ}\sum_{m,n>0}(-1)^n A_n\delta_{n+k,0}\otimes (-1)^m A_m\delta_{m-k,0}\\
\hphantom{\Omega (X\otimes X)}{} = (X\otimes X)\Omega - 2\otimes X^2- X^2 \otimes 2+4\sum_{m,n>0}(-1)^n A_n\otimes (-1)^m A_m\delta_{m+n,0}.
\end{gather*}
We use that $X^2= 0$, and since both $m$ and $n$ in the last sum are always positive, the last term is also zero.
 \end{proof}

\begin{Corollary}\label{cor8} Let $\tau\in \B_{\rm odd}$ be a solution of~\eqref{binBKP}, and let $X=\sum\limits_{n> 0} A_{n} \varphi_n$ with $A_{n}\in \bC$. Then $\tau^\prime =X\tau$ is also a solution of~\eqref{binBKP}.
 \end{Corollary}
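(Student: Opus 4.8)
The plan is to deduce this directly from the Proposition just established, which gives the commutation relation $\Omega(X\otimes X)=(X\otimes X)\Omega$. The whole argument is a short manipulation in $\B_{\rm odd}\otimes\B_{\rm odd}$, so the only real work is to set up the factorization of the tensor product correctly and to confirm that $\tau'=X\tau$ genuinely lies in $\B_{\rm odd}$.

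First I would observe that the tensor product factorizes: since $\tau'=X\tau$, one has $\tau'\otimes\tau'=X\tau\otimes X\tau=(X\otimes X)(\tau\otimes\tau)$, where $X\otimes X$ acts diagonally. Applying $\Omega$ then gives $\Omega(\tau'\otimes\tau')=\Omega(X\otimes X)(\tau\otimes\tau)$. Next I would invoke the Proposition to move $\Omega$ past $X\otimes X$, obtaining $\Omega(\tau'\otimes\tau')=(X\otimes X)\,\Omega(\tau\otimes\tau)$. Finally, using the hypothesis that $\tau$ solves the bilinear identity, so that $\Omega(\tau\otimes\tau)=\tau\otimes\tau$, the right-hand side collapses to $(X\otimes X)(\tau\otimes\tau)=\tau'\otimes\tau'$, which is exactly the bilinear identity~(\ref{binBKP}) for $\tau'$. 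Thus $\tau'$ is again a solution.

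The one point that warrants a little care — and which I regard as the only genuine obstacle, since all the substantive content already sits in the preceding Proposition — is the well-definedness of $\tau'=X\tau$ as an element of the Fock space $\B_{\rm odd}$, so that the factorization and the passage of $\Omega$ through $X\otimes X$ are legitimate. This is guaranteed by the fermionic action $\varphi_k(Q_\lambda)=Q_{(k,\lambda)}$: expanding $\tau$ in the basis $\{Q_\lambda\}$ of $\B_{\rm odd}$, each operator $\varphi_n$ maps $\B_{\rm odd}$ into itself, hence $X\tau\in\B_{\rm odd}$ and every identity above is an identity in $\B_{\rm odd}\otimes\B_{\rm odd}$. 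No input beyond the Proposition and the standing relation $X^2=0$ is required.
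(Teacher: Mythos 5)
Your proof is correct and matches the paper's intended argument: the paper states Corollary~\ref{cor8} as an immediate consequence of the preceding Proposition, precisely via the factorization $\tau'\otimes\tau'=(X\otimes X)(\tau\otimes\tau)$ and the commutation $\Omega(X\otimes X)=(X\otimes X)\Omega$ that you spell out. Your extra check that $X\tau\in\B_{\rm odd}$ (since each $\varphi_n$ preserves $\B_{\rm odd}$) is a sensible addition but not a departure from the paper's route.
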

The vertex operator presentation~(\ref{ver1}) of Schur $Q$-functions and Corollary~\ref{cor8} immediately imply that Schur $Q$-functions are solutions of~(\ref {binBKP}), since constant function~$1$ is a~solution of~(\ref{binBKP}). This argument reproves the result of~\cite{You1} and easily extends to more general case of multiparameter Schur $Q$-functions defined in the next section.

\section[Multiparameter Schur $Q$-functions are solutions of the BKP hierarchy]{Multiparameter Schur $\boldsymbol{Q}$-functions are solutions\\ of the BKP hierarchy}\label{multi_sec}

Multiparameter Schur $Q$-functions were introduced in~\cite{Iv1} as a generalization of definition (\ref{defQ}). Fix an infinite sequence of complex numbers $a= (a_0, a_1, a_2,\dots)$. Consider the analogue of a~power of a~variable~$x$
\begin{gather*}
(x|a)^{k}= (x-a_0)(x-a_2)\cdots (x- a_{k-1}).
\end{gather*}
We also define a shift operation $\tau\colon a_k\mapsto a_{k+1}$, so that
\begin{gather*}
(x|\tau a)^{k}= (x-a_1)(x-a_2)\cdots (x- a_{k}).
\end{gather*}

\begin{Definition} Let $\alpha=(\alpha_1,\dots,\alpha_l)$ be a vector with non-negative integer coefficients $\alpha_i\in \bZ_{\ge 0}$. The multiparameter Schur $Q$-function in variables $(x_1,\dots, x_N)$ with $l\le N$ is defined by
\begin{gather}\label{defQa}
Q^{(a)}_\alpha (x_1,\dots, x_N)=\frac{2^l}{(N-l)!} \sum_{\sigma\in S_N} \prod_{i=1}^{l}(x_{\sigma(i)}|a)^{\alpha_i} \prod_{i\le l,i<j\le N }\frac{x_{\sigma(i)}+x_{\sigma(j)}}{x_{\sigma(i)}- x_{\sigma(j)}}.
\end{gather}
\end{Definition}

When $a=(0,0,0,\dots )$ and $\alpha$ is a strict partition, one gets back (\ref{defQ}), the classical Schur $Q$-functions $Q_\alpha (x_1,\dots, x_N)$. The evaluation $a=(0,1,2,\dots)$ gives factorial Schur $Q$-functions denoted as $Q^*_\alpha(x)$, those applications are outlined in the introduction. The multiparameter Schur $Q$-functions enjoy a~stability property, hence one can consider $Q^{(a)}_\alpha(x_1,x_2,\dots)$ to be a~function of infinitely many variables.

Note from (\ref{defQa}) that for any permutation $\sigma \in S_l$,
\begin{gather}\label{alter}
Q^{(a)}_{\alpha} (x_1,\dots, x_N)= (-1)^\sigma Q^{(a)}_{\sigma(\alpha)} (x_1,\dots, x_N),
\end{gather}
where $(-1)^\sigma$ is the sign of permutation $\sigma$ \cite[Proposition~3]{Kor}. Hence, $Q^{(a)}_{\alpha} =0$ if $\alpha_i=\alpha_j$ for some~$i$,~$j$, and for a vector $\alpha=(\alpha_1,\dots,\alpha_l)$ with positive distinct integer coefficients $\alpha_i\in \bZ_{>0}$, function $Q^{(a)}_{\alpha}$ coincides up to a sign with another $Q^{(a)}_{\alpha^\prime}$ labeled by strict partition~$\alpha^\prime$.

One can check directly the following transitions between regular and multiparameter powers of variables.
\begin{Lemma}\label{transitions}For $n=0,1,2,\dots$
\begin{gather*}
(u-a_1)\cdots (u-a_{n}) =\sum_{k=0}^{\infty} (-1)^{n-k}e_{n-k}(a_1,\dots, a_{n}) u^k, \\
\frac{1}{(u-a_1)\cdots (u-a_{n})} =\sum_{k=0}^{\infty}h_{k-n}(a_1,\dots, a_{n}) u^{-k}, \\
u^n =\sum_{k=0}^{\infty} h_{n-k}(a_1,\dots, a_{k+1}) (u-a_1)\cdots (u-a_k), \\
\frac{1}{u^{n}} =\sum_{k=0}^{\infty} (-1)^{n-k}e_{k-n}(a_1,\dots, a_{k-1}) \frac{1}{(u-a_1)\cdots (u-a_k)}.
\end{gather*}
\end{Lemma}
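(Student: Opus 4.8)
The four identities come in two reciprocal pairs: the first and second express the ``multiparameter'' quantity $(u-a_1)\cdots(u-a_n)$ and its reciprocal in the ordinary bases $\{u^k\}$ and $\{u^{-k}\}$, while the third and fourth invert these expansions, writing the ordinary powers $u^{\pm n}$ back in the multiparameter bases. My plan is to obtain the first two directly from the definitions and Lemma~\ref{Lemma1}, and then to derive the remaining two by induction on $n$, using the one-variable recursions for $h_k$ and $e_k$. Throughout I abbreviate $\pi_k(u)=(u-a_1)\cdots(u-a_k)$, with $\pi_0(u)=1$.

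For the first identity I would simply expand the product $\pi_n(u)=\prod_{i=1}^n(u-a_i)$: selecting $u$ from $k$ of the factors and $-a_i$ from the remaining $n-k$, the coefficient of $u^k$ is the signed sum over all $(n-k)$-subsets of $\{a_1,\dots,a_n\}$, which is exactly $(-1)^{n-k}e_{n-k}(a_1,\dots,a_n)$. For the second identity I would specialize the generating function $H(u)=\prod_i(1-x_i/u)^{-1}=\sum_{k\ge0}h_k u^{-k}$ of Lemma~\ref{Lemma1} to the finite alphabet $\{a_1,\dots,a_n\}$; since $\prod_{i=1}^n(1-a_i/u)^{-1}=u^n/\pi_n(u)$, dividing by $u^n$ and reindexing gives $\pi_n(u)^{-1}=\sum_{k\ge0}h_{k-n}(a_1,\dots,a_n)u^{-k}$, the terms with $k<n$ vanishing automatically.

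The third and fourth identities I would prove by induction on $n$, the base case $n=0$ being immediate. For the third, I multiply the degree-$n$ expansion of $u^n$ by $u$ and use $u\,\pi_k(u)=\pi_{k+1}(u)+a_{k+1}\pi_k(u)$; after reindexing the two resulting sums, the coefficient of $\pi_j(u)$ becomes $h_{n+1-j}(a_1,\dots,a_j)+a_{j+1}h_{n-j}(a_1,\dots,a_{j+1})$, which collapses to $h_{n+1-j}(a_1,\dots,a_{j+1})$ by the standard add-a-variable recursion $h_m(a_1,\dots,a_{j+1})=h_m(a_1,\dots,a_j)+a_{j+1}h_{m-1}(a_1,\dots,a_{j+1})$. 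For the fourth I multiply the expansion of $u^{-n}$ by $u^{-1}$ and expand each $1/(u\,\pi_k(u))$ via the formal relation $\frac{1}{u\,\pi_k(u)}=\frac{1}{\pi_{k+1}(u)}-a_{k+1}\frac{1}{u\,\pi_{k+1}(u)}$, iterated in $\bC[[1/u]]$; collecting the coefficient of $\pi_m(u)^{-1}$ reduces the claim to the combinatorial identity $\sum_{k=n}^{m-1}e_{k-n}(a_1,\dots,a_{k-1})\,a_{k+1}\cdots a_{m-1}=e_{m-n-1}(a_1,\dots,a_{m-1})$, which is the ``largest omitted index'' decomposition of $e_{m-n-1}$ and follows from the dual recursion $e_j(a_1,\dots,a_r)=e_j(a_1,\dots,a_{r-1})+a_r e_{j-1}(a_1,\dots,a_{r-1})$.

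I expect the main obstacle to be the bookkeeping in the inductive step for the fourth identity: one must manipulate an infinite expansion in $\bC[[1/u]]$ and then match coefficients, and the resulting coefficient identity, though elementary, needs the slightly delicate ``largest omitted index'' argument rather than a one-line cancellation. Conceptually, the cleaner organizing principle is that identities one and three (respectively two and four) are mutually inverse triangular changes of basis between $\{u^k\}$ and $\{\pi_k(u)\}$; once one transition in each pair is established, the reciprocal one is forced, so the real content is concentrated in verifying that the $e$- and $h$-coefficients are genuinely inverse to one another.
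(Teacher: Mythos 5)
Your proof is correct, and it is worth noting at the outset that the paper itself contains no proof of Lemma~\ref{transitions}: the lemma is prefaced only by the remark that ``one can check directly'' these transitions, so your argument is not an alternative to a written proof but a complete version of the verification the author left to the reader. Your treatment of the first two identities (direct expansion of the product, and specialization of $H(u)=\prod_i(1-x_i/u)^{-1}$ from Lemma~\ref{Lemma1} to the finite alphabet $(a_1,\dots,a_n)$, with the terms $k<n$ killed by the convention $h_{<0}=0$) is exactly the intended direct check. The inductions for the last two identities are sound: for the third, the relation $u\,\pi_k(u)=\pi_{k+1}(u)+a_{k+1}\pi_k(u)$ combined with the add-a-variable recursion $h_m(a_1,\dots,a_{j+1})=h_m(a_1,\dots,a_j)+a_{j+1}h_{m-1}(a_1,\dots,a_{j+1})$ closes the induction, including the boundary term $j=0$, where $h_{n+1}(a_1)=a_1h_n(a_1)$; for the fourth, the iterated expansion $\frac{1}{u\,\pi_k(u)}=\sum_{m\ge k+1}(-1)^{m-k-1}a_{k+1}\cdots a_{m-1}\,\pi_m(u)^{-1}$ is legitimate in $\bC[[1/u]]$ since the $m$-th term has order $m$, and the coefficient identity $\sum_{k=n}^{m-1}e_{k-n}(a_1,\dots,a_{k-1})\,a_{k+1}\cdots a_{m-1}=e_{m-n-1}(a_1,\dots,a_{m-1})$ you reduce it to is genuinely true: each monomial of the right-hand side is classified by its largest omitted index $k\in\{n,\dots,m-1\}$, or equivalently one inducts on $m$ with $e_j(a_1,\dots,a_m)=e_j(a_1,\dots,a_{m-1})+a_me_{j-1}(a_1,\dots,a_{m-1})$, exactly as you indicate. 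Two small points deserve emphasis if this were written out in full: the base case of the fourth identity rests on $e_k(a_1,\dots,a_{k-1})=0$ for $k\ge1$ (degree exceeds the number of variables), and your closing remark that the reciprocal transitions are ``forced'' by triangularity is true only in the sense that the inverse transition matrix is unique --- one must still verify that the stated $e$- and $h$-coefficients realize that inverse, which is precisely what your inductions accomplish, so the remark is an interpretation rather than a shortcut.
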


Double application of Lemma \ref{transitions} implies the following useful observation.
\begin{Lemma}\label{identity} For any sequence $a=(0, a_1, a_2,\dots)$
\begin{gather*}
\sum_{m=0}^{\infty} \frac{(x|a)^m}{(u|\tau a)^m}= \sum_{m=0}^{\infty} \frac{x^m}{u^m}.
\end{gather*}
\end{Lemma}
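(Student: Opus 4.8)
The plan is to treat both sides as formal power series in $1/u$ whose coefficients are polynomials in $x$ and the parameters $a_i$, and to compare the coefficient of $u^{-n}$ for each $n\ge 0$. The whole argument then amounts to expanding one factor by Lemma~\ref{transitions} and recognizing the resulting coefficient as another instance of the same lemma, which is the ``double application'' the remark promises.

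First I would expand each summand of the left-hand side in powers of $1/u$. Applying the second formula of Lemma~\ref{transitions} with the sequence $(a_1,\dots,a_m)$ gives
\[
\frac{1}{(u|\tau a)^m}=\frac{1}{(u-a_1)\cdots(u-a_m)}=\sum_{k\ge 0} h_{k-m}(a_1,\dots,a_m)\,u^{-k},
\]
a series starting at $u^{-m}$, so only finitely many $m$ contribute to any fixed power $u^{-n}$ and the interchange of summations below is legitimate. Collecting the coefficient of $u^{-n}$ yields
\[
\sum_{m\ge 0}\frac{(x|a)^m}{(u|\tau a)^m}=\sum_{n\ge 0}u^{-n}\sum_{m=0}^{n}(x|a)^m\,h_{n-m}(a_1,\dots,a_m).
\]
Since the right-hand side of the lemma is $\sum_{n\ge0}x^n u^{-n}$, the claim is equivalent to the family of scalar identities $\sum_{m=0}^{n}(x|a)^m\,h_{n-m}(a_1,\dots,a_m)=x^n$ for all $n$.

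To finish I would identify this last identity with the third formula of Lemma~\ref{transitions}. Because $a_0=0$, adjoining $a_0$ to the argument list of a complete symmetric function changes nothing, so $h_{n-m}(a_1,\dots,a_m)=h_{n-m}(a_0,a_1,\dots,a_m)$, while $(x|a)^m=(x-a_0)(x-a_1)\cdots(x-a_{m-1})$. The transition formulas hold for an arbitrary sequence, so relabeling the third one by the shift $a_i\mapsto a_{i-1}$ turns $u^n=\sum_k h_{n-k}(a_1,\dots,a_{k+1})(u-a_1)\cdots(u-a_k)$ into
\[
x^n=\sum_{m\ge 0} h_{n-m}(a_0,\dots,a_m)(x-a_0)\cdots(x-a_{m-1})=\sum_{m=0}^{n}(x|a)^m\,h_{n-m}(a_1,\dots,a_m),
\]
which is exactly the required scalar identity. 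Matching coefficients of $u^{-n}$ then proves the lemma. The only real obstacle is index bookkeeping: one must check that the argument lists of the complete symmetric functions produced by the first expansion coincide, after using $a_0=0$, with those demanded by the third transition formula, and that the shift $a_i\mapsto a_{i-1}$ is applied consistently to both the $h$-coefficients and the factorial $(x-a_0)\cdots(x-a_{m-1})$. Once the indices are aligned, no computation beyond the two invocations of Lemma~\ref{transitions} is needed.
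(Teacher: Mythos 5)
Your proof is correct, and it is essentially the mirror image of the paper's: both arguments are double applications of Lemma~\ref{transitions} with an interchange of summation in the middle, but you use the complementary pair of transition formulas. You expand the denominators $1/(u|\tau a)^m$ in powers of $u^{-1}$ (second formula, with complete symmetric functions) and then recognize the coefficient of $u^{-n}$ as the shifted third formula expanding $x^n$ in the basis $(x|a)^m$; the paper instead expands the numerators $(x|a)^m$ in powers of $x$ (first formula, with elementary symmetric functions, using $a_0=0$ to drop $a_0$ from the argument list), swaps the sums, and recognizes the inner sum over $m$ as the fourth formula expanding $1/u^k$ in the basis $1/(u|\tau a)^m$. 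The underlying content is the same in both cases, namely that the transition matrices of Lemma~\ref{transitions} are mutually inverse, but your route has a small technical advantage: after collecting the coefficient of $u^{-n}$ only the finitely many terms $m\le n$ survive, so the identity to verify, $\sum_{m=0}^{n}(x|a)^m h_{n-m}(a_1,\dots,a_m)=x^n$, is a finite polynomial identity, whereas the paper's inner sum $\sum_{m}(-1)^{m-k}e_{m-k}(a_1,\dots,a_{m-1})/(u|\tau a)^m$ is an infinite formal series that must be resummed. Your index bookkeeping (relabeling the third formula to the sequence $(a_0,a_1,\dots)$ and using $a_0=0$ to drop the extra argument of $h_{n-m}$) is handled correctly, so the argument goes through as written.
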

\begin{proof}
\begin{align*}
\sum_{m=0}^{\infty} \frac{(x|a)^m}{(u|\tau a)^m}& = \sum_{m,k=0}^{\infty} (-1)^{m-k}e_{m-k}(0, a_1\cdots a_{m-1}) x^k\frac{1}{(u|\tau a)^m}\\
& =\sum_{k=0}^{\infty} x^k\sum_{m=0}^{\infty} (-1)^{k-m}e_{m-k}(a_1\cdots a_{m-1}) \frac{1}{(u|\tau a)^m}= \sum_{k=0}^{\infty} \frac{x^k}{u^k}.\tag*{\qed}
\end{align*}\renewcommand{\qed}{}
\end{proof}

Consider a part of the generating function~(\ref{genQ}) of ordinary Schur $Q$-functions that corresponds only to positive values of $\lambda_i$:
\begin{gather*}
Q^+(u_1,\dots, u_l)= \sum_{\lambda_1,\dots, \lambda_l\in \bZ_{>0}} \frac{Q_\lambda}{u_1^{\lambda_1}\cdots u_l^{\lambda_l}}.
\end{gather*}
By (\ref{alter}), every non-zero coefficient of $Q^+(u_1,\dots, u_l)$ up to a sign coincides with a classical Schur $Q$-function labeled by an appropriate strict partition. In~\cite{Kor} the following remarkable observation is made.
\begin{Theorem}[\cite{Kor}]\label{Korotkih} For any sequence $a=(0, a_1, a_2,\dots)$
\begin{gather*}
Q^+(u_1,\dots, u_l)= \sum_{\lambda_1,\dots, \lambda_l\in \bZ_{> 0}} \frac{Q^{(a)}_\lambda}{(u_1|\tau a)^{\lambda_1}\cdots (u_l|\tau a)^{\lambda_l}}.
\end{gather*}
\end{Theorem}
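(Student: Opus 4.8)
The plan is to prove the identity directly at the level of generating functions: substitute the explicit symmetrization formulas (\ref{defQ}) and (\ref{defQa}), interchange the order of summation, and reduce everything to the single-variable identity of Lemma~\ref{identity}. The structural observation that makes this work is that $Q_\lambda$ and $Q^{(a)}_\lambda$ are built from an \emph{identical} rational factor, and differ only in the monomial carried along in each term of the symmetrization.

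First I would fix $N\ge l$ and abbreviate the common factor
\[
\Delta_\sigma = \prod_{i\le l,\, i<j\le N}\frac{x_{\sigma(i)}+x_{\sigma(j)}}{x_{\sigma(i)}-x_{\sigma(j)}},
\]
so that (\ref{defQ}) reads $Q_\lambda=\frac{2^l}{(N-l)!}\sum_\sigma \Delta_\sigma\prod_{i=1}^l x_{\sigma(i)}^{\lambda_i}$ and (\ref{defQa}) reads $Q^{(a)}_\lambda=\frac{2^l}{(N-l)!}\sum_\sigma \Delta_\sigma\prod_{i=1}^l (x_{\sigma(i)}|a)^{\lambda_i}$, with the \emph{same} $\Delta_\sigma$ in both (reading the unlabelled product $\prod_{i<j}$ of (\ref{defQ}) with the index range $i\le l$, $i<j\le N$ of (\ref{defQa}), as forced by the specialization $a=(0,0,\dots)$).

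Substituting the second expression into the right-hand side of the theorem, I would pull the finite sum over $\sigma\in S_N$ outside the formal sum over $\lambda_1,\dots,\lambda_l\in\bZ_{>0}$ and factor the summand across the index $i$:
\[
\sum_{\lambda_1,\dots,\lambda_l>0}\frac{Q^{(a)}_\lambda}{(u_1|\tau a)^{\lambda_1}\cdots(u_l|\tau a)^{\lambda_l}}
=\frac{2^l}{(N-l)!}\sum_{\sigma\in S_N}\Delta_\sigma\prod_{i=1}^l\left(\sum_{\lambda_i>0}\frac{(x_{\sigma(i)}|a)^{\lambda_i}}{(u_i|\tau a)^{\lambda_i}}\right).
\]
Here the hypothesis $a_0=0$ is exactly what makes Lemma~\ref{identity} applicable: subtracting the $m=0$ term (equal to $1$ on both sides) gives $\sum_{m>0}(x|a)^m/(u|\tau a)^m=\sum_{m>0}x^m/u^m$, so with $x=x_{\sigma(i)}$ and $u=u_i$ each inner factor equals $\sum_{\lambda_i>0}x_{\sigma(i)}^{\lambda_i}/u_i^{\lambda_i}$. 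Recombining the product of single-variable series back into one sum over $\lambda$ turns the coefficient of $u_1^{-\lambda_1}\cdots u_l^{-\lambda_l}$ into $\frac{2^l}{(N-l)!}\sum_\sigma\Delta_\sigma\prod_i x_{\sigma(i)}^{\lambda_i}=Q_\lambda$, so the whole right-hand side collapses to $Q^+(u_1,\dots,u_l)$, as claimed.

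The computation is short, and I expect the only real work to be in justifying the two rearrangements as identities of \emph{formal} power series rather than analytic ones. Concretely: each $1/(u_i|\tau a)^{\lambda_i}$ must first be expanded as a series in $u_i^{-1}$ (Lemma~\ref{transitions} guarantees this is well defined), and then one must check that interchanging the finite $S_N$-sum with the infinite $\lambda$-sum, and splitting the multi-index geometric-type series into a product over $i$ of single-variable series, are both legitimate in the ring of formal power series in the $u_i^{-1}$ with coefficients in $\B_{\rm odd}$. A secondary point worth stating explicitly is that the entire argument rests on $\Delta_\sigma$ being genuinely common to (\ref{defQ}) and (\ref{defQa}), so the consistency of the two product ranges should be recorded before the main calculation.
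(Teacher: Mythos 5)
Your proposal is correct and takes essentially the same route as the paper's own proof: substitute definition (\ref{defQa}) into the right-hand side, interchange the finite $S_N$-sum with the sum over $\lambda$ so that the inner sum factors over $i$, apply Lemma~\ref{identity} (with the trivial $m=0$ term removed) to each single-variable factor, and recombine to recover $Q^+(u_1,\dots,u_l)$. Your added remarks on the common rational factor $\Delta_\sigma$ and on justifying the rearrangements as formal-series identities are points the paper leaves implicit, but they do not change the argument.
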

\begin{proof} In~\cite{Kor} theorem is proved by induction on the length of the vector $\lambda$. A very short proof of this theorem follows from Lemma~\ref{identity} and definition (\ref{defQa}). Indeed,
 \begin{gather*}
 \sum_{\lambda_i \in \bZ_{>0}} \frac{Q^{(a)}_\lambda}{(u_1|\tau a)^{\lambda_1}\cdots (u_l|\tau a)^{\lambda_l}}
 =\frac{2^l}{(N-l)!}\sum_{\sigma\in S_N} \prod_{i=1}^{l}\sum_{\lambda_i\in \bZ_{> 0}}\frac{(x_{\sigma(i)}|a)^{\lambda_i}}{(u_i|\tau a)^{\lambda_i}} \prod_{i\le l,i<j\le N }\frac{x_{\sigma(i)}+x_{\sigma(j)}}{x_{\sigma(i)}- x_{\sigma(j)}}\\
 \qquad{} =\frac{2^l}{(N-l)!}\sum_{\sigma\in S_N} \prod_{i=1}^{l}\sum_{\lambda_i\in \bZ_{> 0}}\frac{x_{\sigma(i)}^{\lambda_i}}{u_i^{\lambda_i}} \prod_{i\le l,i<j\le N }\frac{x_{\sigma(i)}+x_{\sigma(j)}}{x_{\sigma(i)}- x_{\sigma(j)}}=Q^+(u_1,\dots, u_l).\tag*{\qed}
 \end{gather*}\renewcommand{\qed}{}
\end{proof}

Thus, Theorem \ref{Korotkih} suggests that multiparameter Schur $Q$-functions are obtained from classical Schur $Q$-functions by the change of the basis of expansion $\big\{1/u^{k}\big\} \to\big\{ 1/(u|\tau a)^k\big\}$ in the generating function $Q^+(u_1,\dots, u_l)$.

Lemma \ref{transitions} immediately implies the following relations between classical and multiparameter Schur $Q$-functions (see also \cite[Theorem~10.2]{Iv1})
\begin{Corollary} \label{cor10} For any sequence of complex numbers $a= (0, a_1, a_2,\dots)$ and any integer vector $\alpha =(\alpha_1,\dots, \alpha_l)$ with $\alpha_i\in \bZ_{>0}$,
\begin{gather*}
Q^{(a)}_{\alpha}=\sum_{\lambda_1,\dots, \lambda_l \in \bZ_{> 0}} (-1)^{\sum \lambda_i-\sum\alpha_i}e_{\alpha_1-\lambda_1}(a_1,\dots, a_{\alpha_1-1})
\cdots e_{\alpha_l-\lambda_l}(a_1,\dots, a_{\alpha_l-1}) {Q_\lambda}.
\end{gather*}
\end{Corollary}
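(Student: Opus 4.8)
The plan is to read the stated expansion directly off Theorem~\ref{Korotkih} by re-expanding the classical generating function $Q^+(u_1,\dots,u_l)$ in the shifted basis $\{1/(u_i|\tau a)^{k}\}$ and then comparing coefficients. The only extra ingredient beyond Theorem~\ref{Korotkih} is the last transition formula of Lemma~\ref{transitions}, which rewrites an ordinary power $1/u^n$ in terms of the shifted powers $1/(u|\tau a)^k$. First I would record the two presentations of $Q^+(u_1,\dots,u_l)$ supplied by Theorem~\ref{Korotkih}: on one side $\sum_{\lambda\in\bZ_{>0}^l} Q_\lambda \prod_i u_i^{-\lambda_i}$, and on the other $\sum_{\alpha\in\bZ_{>0}^l} Q^{(a)}_\alpha \prod_i (u_i|\tau a)^{-\alpha_i}$.

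Into the first presentation I would substitute, factor by factor, the identity
\begin{gather*}
\frac{1}{u_i^{\lambda_i}}=\sum_{\alpha_i=0}^{\infty} (-1)^{\lambda_i-\alpha_i}\, e_{\alpha_i-\lambda_i}(a_1,\dots,a_{\alpha_i-1})\,\frac{1}{(u_i|\tau a)^{\alpha_i}},
\end{gather*}
which is exactly the fourth relation of Lemma~\ref{transitions} with $(u-a_1)\cdots(u-a_{\alpha_i})=(u_i|\tau a)^{\alpha_i}$. Since $e_{\alpha_i-\lambda_i}$ vanishes unless $\alpha_i\ge\lambda_i\ge 1$, the inner sum is supported on $\alpha_i\in\bZ_{>0}$, so the substitution stays within the required index set. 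After interchanging the finite product of inner sums with the outer sum over $\lambda$ and regrouping by the exponent vector $\alpha$, the coefficient of $\prod_i (u_i|\tau a)^{-\alpha_i}$ becomes
\begin{gather*}
\sum_{\lambda\in\bZ_{>0}^l} \Bigg(\prod_{i=1}^{l} (-1)^{\lambda_i-\alpha_i}\, e_{\alpha_i-\lambda_i}(a_1,\dots,a_{\alpha_i-1})\Bigg)\, Q_\lambda .
\end{gather*}
Comparing this with the $Q^{(a)}_\alpha$-presentation of $Q^+$ and pulling the overall sign $(-1)^{\sum\lambda_i-\sum\alpha_i}$ out of the product yields the asserted formula.

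The single point that I would argue rather than merely compute — and hence the main obstacle — is the legitimacy of this coefficient comparison. It rests on the family $\{1/(u|\tau a)^{k}\}_{k\ge 0}$ being a genuine basis for the relevant formal expansions, equivalently on the change of basis in Lemma~\ref{transitions} being invertible. The clean way to see this is that the transition is unitriangular: the diagonal term ($\alpha_i=\lambda_i$) contributes $(-1)^{0}e_0=1$, while every term with $\alpha_i<\lambda_i$ vanishes, so the passage between $\{1/u^k\}$ and $\{1/(u|\tau a)^k\}$ is an invertible lower-triangular change of coordinates. This guarantees that the coefficient of each $\prod_i (u_i|\tau a)^{-\alpha_i}$ is uniquely determined, so the comparison is valid and the corollary follows. (As a consistency check one could note that the resulting formula agrees with the relation recorded in \cite[Theorem~10.2]{Iv1}.)
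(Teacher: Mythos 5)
Your proposal is correct and follows exactly the route the paper intends: the paper derives Corollary~\ref{cor10} as an immediate consequence of Theorem~\ref{Korotkih} combined with the fourth transition formula of Lemma~\ref{transitions}, i.e., re-expanding $Q^+(u_1,\dots,u_l)$ in the basis $\{1/(u|\tau a)^k\}$ and matching coefficients. Your additional justification of the coefficient comparison via the unitriangularity of the transition (diagonal term $e_0=1$, vanishing below the diagonal) is a sound filling-in of a step the paper leaves implicit.
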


\begin{Theorem} For any sequence of complex numbers $a= (0, a_1, a_2,\dots)$ and any integer vector $\alpha =(\alpha_1,\dots, \alpha_l)$ with $\alpha_i\in \bZ_{> 0}$,
 multiparameter Schur $Q$-function $Q^{(a)}_\alpha$ is a solution of~\eqref{binBKP}.
 \end{Theorem}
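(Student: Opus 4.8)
The plan is to reduce the statement to the already-established facts that the constant $1$ solves~(\ref{binBKP}) and that the class of solutions is preserved under multiplication by any operator of the form $\sum_{n>0}A_n\varphi_n$ (Corollary~\ref{cor8}). The bridge between these facts and $Q^{(a)}_\alpha$ is Corollary~\ref{cor10}, which expands $Q^{(a)}_\alpha$ in the classical Schur $Q$-functions $Q_\lambda$ with a coefficient that factors as a product over the index~$i$. Combined with the vertex operator realization~(\ref{ver1}), this product structure is exactly what lets one write $Q^{(a)}_\alpha$ as a composition of $l$ neutral-fermion operators applied to~$1$.

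Concretely, first I would substitute $Q_{(\lambda_1,\dots,\lambda_l)}=\varphi_{\lambda_1}\cdots\varphi_{\lambda_l}(1)$ from~(\ref{ver1}) into the expansion of Corollary~\ref{cor10}, obtaining
\[
Q^{(a)}_\alpha=\sum_{\lambda_1,\dots,\lambda_l\in\bZ_{>0}}\ \prod_{i=1}^{l}(-1)^{\lambda_i-\alpha_i}e_{\alpha_i-\lambda_i}(a_1,\dots,a_{\alpha_i-1})\ \varphi_{\lambda_1}\cdots\varphi_{\lambda_l}(1).
\]
Since the $i$-th coefficient depends only on $\lambda_i$ and is attached to the $i$-th operator $\varphi_{\lambda_i}$, the multiple sum decouples and I can write $Q^{(a)}_\alpha=X_1X_2\cdots X_l(1)$, where
\[
X_i=\sum_{\lambda\in\bZ_{>0}}(-1)^{\lambda-\alpha_i}e_{\alpha_i-\lambda}(a_1,\dots,a_{\alpha_i-1})\,\varphi_\lambda.
\]
Because $e_m=0$ for $m<0$, each $X_i$ is a finite sum over $1\le\lambda\le\alpha_i$, and in particular it has precisely the shape $X_i=\sum_{n>0}A_n\varphi_n$ with $A_n\in\bC$ demanded by Corollary~\ref{cor8}; the restriction $\lambda_i\in\bZ_{>0}$ in Corollary~\ref{cor10} is what guarantees that only the operators $\varphi_n$ with $n>0$ occur.

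It then remains to iterate Corollary~\ref{cor8}. Starting from the solution $\tau=1$, applying the corollary with $X=X_l$ shows that $X_l(1)$ is a solution, applying it again with $X=X_{l-1}$ shows that $X_{l-1}X_l(1)$ is a solution, and after $l$ steps I conclude that $X_1\cdots X_l(1)=Q^{(a)}_\alpha$ solves~(\ref{binBKP}), hence the BKP hierarchy. I expect the one point meriting care to be the decoupling of the sum into the product $X_1\cdots X_l$: one must check that the factored coefficient of Corollary~\ref{cor10} is matched with the operators $\varphi_{\lambda_i}$ in the correct order so that the sums separate cleanly, and that each $X_i$ involves only creation operators $\varphi_n$ with $n>0$. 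Everything else is a direct, essentially bookkeeping, application of the single-operator result.
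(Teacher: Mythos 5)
Your proposal is correct and follows essentially the same route as the paper: both expand $Q^{(a)}_\alpha$ via Corollary~\ref{cor10}, substitute the vertex operator realization~(\ref{ver1}), decouple the multiple sum into a product of operators $X_i=\sum_{n>0}A_n\varphi_n$ (finite sums, since $e_m=0$ for $m<0$), and iterate Corollary~\ref{cor8} starting from the solution $\tau=1$. The only difference is immaterial bookkeeping in the order of the factors $X_i$.
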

 \begin{proof}The constant polynomial $1$ is obviously a solution of~(\ref{binBKP}) in $\B_{\rm odd}$. By the vertex operator presentation~(\ref{ver1}) and Corollary~\ref{cor10},
\begin{gather*}
Q^{(a)}_{\alpha}=
\sum_{\lambda_1,\dots, \lambda_l\in \bZ_{>0}} A_{\lambda_1,\alpha_1}\cdots A_{\lambda_l,\alpha_l} \varphi_{\lambda_l}\cdots \varphi_{\lambda_1} \cdot 1
\end{gather*}
 with $A_{n,k}= (-1)^{n-k}e_{k-n}(a_1,\dots, a_{k-1})$. Hence,
\begin{gather*}
Q^{(a)}_{\alpha}= X_{\alpha_l}\cdot X_{\alpha_1} \cdot 1,
\end{gather*}
where
$ X_{m} =\sum\limits_{s>0} (-1)^{m-s}e_{s-m}(a_1,\dots, a_{s-1})\varphi_{s}$, and Corollary~\ref{cor8} proves the statement.
\end{proof}

\subsection*{Acknowledgements}
The author would like to thank the referee for the thoughtful and careful review that helped to improve the text of the paper.

\pdfbookmark[1]{References}{ref}
\LastPageEnding

\end{document}